\newif\ifconf\conffalse
\DeclareMathOperator{\rank}{rank}
\DeclareMathOperator{\polylog}{polylog}
\DeclareMathOperator{\poly}{poly}
\newcommand{\LM}{\mathop{LM}}
\newcommand{\MR}{\mathop{MR}}
\newcommand{\inprod}[1]{\langle #1 \rangle}
\newcommand{\eqdef}{\mathbin{\stackrel{\rm def}{=}}}
\newcommand{\eps}{\varepsilon}
\newcommand{\tr}{\mathrm{tr}}
\newcommand{\E}{\mathbb{E}\xspace}
\newcommand{\R}{\mathbb{R}}
\renewcommand{\Pr}{\mathbb{P}}
\newcommand{\argmin}{\operatornamewithlimits{argmin}}
\newcommand{\nnz}{\operatornamewithlimits{nnz}}
\newcommand{\SE}{\Pi} 
\newtheorem{theorem}{Theorem}
\newtheorem{corollary}[theorem]{Corollary}
\newtheorem{fact}[theorem]{Fact}
\newtheorem{definition}[theorem]{Definition}
\newtheorem{lemma}[theorem]{Lemma}
\newtheorem{remark}[theorem]{Remark}
\newcommand{\proofbelow}{3pt}
\newcommand{\afterproof}{\hfill $\blacksquare$ \par \vspace{\proofbelow}}
\renewenvironment{proof}{\noindent\textbf{Proof.}\,}{\afterproof}
\newcommand{\CorollaryName}[1]{\label{cor:#1}}
\newcommand{\DefinitionName}[1]{\label{def:#1}}
\newcommand{\EquationName}[1]{\label{eq:#1}}
\newcommand{\FactName}[1]{\label{fact:#1}}
\newcommand{\LemmaName}[1]{\label{lem:#1}}
\newcommand{\SectionName}[1]{\label{sec:#1}}
\newcommand{\TheoremName}[1]{\label{thm:#1}}
\newcommand{\FigureName}[1]{\label{fig:#1}}
\newcommand{\Corollary}[1]{Corollary~\ref{cor:#1}}
\newcommand{\Definition}[1]{Definition~\ref{def:#1}}
\newcommand{\Equation}[1]{Eq.\:\eqref{eq:#1}}
\newcommand{\Fact}[1]{Fact~\ref{fact:#1}}
\newcommand{\Lemma}[1]{Lemma~\ref{lem:#1}}
\newcommand{\Section}[1]{Section~\ref{sec:#1}}
\newcommand{\Theorem}[1]{Theorem~\ref{thm:#1}}
\newcommand{\Figure}[1]{Figure~\ref{fig:#1}}
\begin{document}

\author{Jelani Nelson\thanks{Institute for Advanced
    Study. \texttt{minilek@ias.edu}. Supported by NSF
  CCF-0832797 and NSF DMS-1128155.}
\and Huy L. Nguy$\tilde{\hat{\mbox{e}}}$n\thanks{Princeton
  University. \texttt{hlnguyen@princeton.edu}. Supported in part by
  NSF CCF-0832797 and a Gordon Wu fellowship.}}

\title{OSNAP: Faster numerical linear algebra\\ algorithms via sparser subspace embeddings}

\maketitle

\begin{abstract}
An {\it oblivious subspace embedding (OSE)} given some parameters $\eps,d$ is a
distribution $\mathcal{D}$ over matrices
$\SE\in\R^{m\times n}$ such that for any linear subspace
$W\subseteq \R^n$ with $\mathrm{dim}(W) = d$ it holds that
$$\Pr_{\SE\sim \mathcal{D}}(\forall x\in W\ \|\SE x\|_2 \in (1\pm
\eps)\|x\|_2) > 2/3 .$$
We show an OSE exists with $m = O(d^2/\eps^2)$
and where every $\Pi$ in the support of $\mathcal{D}$ has exactly
$s=1$ non-zero entries per column. This improves 
previously best known bound in
[Clarkson-Woodruff, arXiv abs/1207.6365].
Our quadratic dependence on $d$ is optimal for
any OSE with $s=1$
[Nelson-Nguy$\tilde{\hat{\textnormal{e}}}$n, 2012].
We also give two OSE's, which we call Oblivious Sparse Norm-Approximating Projections (OSNAPs), that both allow the parameter settings
$m = \tilde{O}(d/\eps^2)$ and $s
= \mathrm{polylog}(d)/\eps$, or $m = O(d^{1+\gamma}/\eps^2)$ and
$s=O(1/\eps)$ for any constant $\gamma>0$.\footnote{We say $g =
  \tilde{\Omega}(f)$ when $g =
  \Omega(f/ \mathrm{polylog}(f))$, $g = \tilde{O}(f)$ when $g =
  O(f\cdot \mathrm{polylog}(f))$, and $g = \tilde{\Theta}(f)$ when $g
  = \tilde{\Omega}(f)$ and $g = \tilde{O}(f)$ simultaneously.}
This $m$ is nearly optimal since  $m \ge d$ is required simply to
no non-zero vector of $W$ lands in the kernel of $\SE$.
These are the first constructions with $m=o(d^2)$ to have
$s=o(d)$. In fact, our OSNAPs are nothing more than the sparse
Johnson-Lindenstrauss
matrices of [Kane-Nelson, SODA 2012]. Our analyses all yield
OSE's that are sampled using either $O(1)$-wise or $O(\log d)$-wise
independent hash functions, which provides some efficiency
advantages over previous work for turnstile streaming applications.
Our main result is essentially a Bai-Yin type theorem in random matrix
theory and is likely to be of independent interest: i.e.\ we show that
for any
$U\in\R^{n\times d}$ with
orthonormal columns and random sparse $\SE$, all singular values of
$\SE U$ lie in $[1-\eps, 1+\eps]$ with good probability.

Plugging OSNAPs into known
algorithms for numerical linear algebra problems such as approximate
least squares regression, low rank approximation, and approximating
leverage scores implies faster algorithms for all these
problems.
For example, for the approximate least squares
regression problem of computing $x$ that minimizes
$\|Ax - b\|_2$ up to a constant factor, our embeddings
imply a running time of
$\tilde{O}(\nnz(A) + r^{\omega})$, which is
essentially the best bound one could hope for (up to logarithmic
factors).
Here $r = \mathrm{rank}(A)$, $\nnz(\cdot)$ counts non-zero entries,
and $\omega$
is the
exponent of matrix multiplication. Previous
algorithms had a worse dependence on $r$.
\end{abstract}

\section{Introduction}\SectionName{intro}
There has been much recent work on applications of dimensionality
reduction to handling large datasets. Typically special features of
the data such as low ``intrinsic'' dimensionality, or sparsity, are
exploited to reduce the volume of data before processing, thus
speeding up analysis time. One success story of this approach is the
applications of fast algorithms for the Johnson-Lindenstrauss
lemma~\cite{JL84}, which allows one to reduce the dimension of a set
of vectors while preserving all pairwise distances. There have been
two popular lines of work in this area: one focusing on fast
embeddings for all vectors~\cite{AC09,
  AL09,AL11,HV11,KMR12,KW11,Vybiral11},
and one focusing on fast
embeddings specifically for sparse
vectors~\cite{Achlioptas03,BOR10,DKS10,KN10, KN12}.

In this work we focus on the problem of constructing an {\em oblivious
  subspace embedding (OSE)} \cite{Sarlos06} and on
applications of these
embeddings. Roughly speaking, the problem is
to design a data-independent distribution over linear mappings such
that
when data come from an {\em unknown} low-dimensional subspace, they
are reduced to roughly their true dimension while their structure (all
distances in the subspace in this case) is preserved at the same
time. It can be seen as a continuation of the approach based on the
Johnson-Lindenstrauss lemma to subspaces. Here we focus on the setting
of sparse inputs, where it is important that the algorithms take time
proportional to the input sparsity. These embeddings have found
applications in numerical linear algebra problems such as least
squares regression, low rank approximation,
and approximating leverage scores
\cite{CW09,CW12,DMMW12,NDT09,Sarlos06,Tropp11}. We refer the
interested reader to the surveys \cite{HMT11,Mahoney11} for an
overview of this area.


 Throughout this document we use
$\|\cdot\|$ to
denote $\ell_2$ norm in the case of vector arguments, and
$\ell_{2\rightarrow 2}$ operator norm in the case of matrix arguments.
Recall the definition of the OSE problem.

\begin{definition}\DefinitionName{ose}
The {\em oblivious subspace embedding problem} is to design a
distribution over $m\times n$ matrices $\SE$ such
that for any $d$-dimensional
subspace $W\subset \R^n$, with probability at least $2/3$ over
the choice of $\SE\sim \mathcal{D}$, the
following inequalities hold for all $x\in W$ simultaneously:
$$(1-\eps)\|x\| \le \|\SE x\| \le (1+\eps)\|x\| .$$
Here $n, d, \eps, \delta$ are given parameters of the problem and we
would like $m$ as small as possible.
\end{definition}

OSE's were first introduced in \cite{Sarlos06} as a means to obtain
fast randomized algorithms for several numerical linear algebra
problems. To see the connection, consider for example the least
squares regression problem of computing
$\argmin_{x\in \R^d} \|Ax - b\|$
for some $A\in\R^{n\times d}$. Suppose $\SE\in\R^{m\times n}$ 
 preserves the $\ell_2$ norm up to $1+\eps$ of all vectors in the
subspace spanned by $b$ and the columns of $A$.
Then computing $\argmin_x \|\SE A x - \SE b\|$ instead gives a
solution that
is within $1+\eps$ of optimal. Since the subspace being preserved has
dimension at most $r+1\le d+1$, where $r = \mathrm{rank}(A)$, one only
needs $m = f(r+1,\eps)$ for whatever function $f$ is
achievable in some OSE construction. Thus the running time for
approximate $n\times d$ regression becomes that for
$f(r,\eps)\times d$ regression, plus an additive term for the time
required to compute $\SE A, \SE b$. Even if $A$ has full column
rank and $r=d$ this is still a gain for instances with $n \gg d$.
Also note that the $2/3$ success probability 
guaranteed by \Definition{ose} can be amplified to
$1-\delta$ by running this procedure $O(\log(1/\delta))$ times with
independent randomness and taking the best $x$ found in any run.

Naively there is no gain from the above approach since the time
to compute $\SE A$ could be as large as matrix multiplication between
an $m\times n$ and $n\times d$ matrix. Since $m \ge d$ in any OSE,
this is $O(nd^{\omega - 1})$ time where $\omega < 2.373\ldots$
\cite{Williams12} is the exponent of
square matrix multiplication, and exact least squares regression can
already be computed in this time bound. The work of \cite{Sarlos06}
overcame this barrier by choosing $\SE$ to be a special structured
matrix, with the property that $\SE A$ can be computed in
time $O(nd\log n)$ (see also \cite{Tropp11}). This matrix $\SE$ was
the Fast Johnson-Lindenstrauss Transform of \cite{AC09}, which has the
property that $\SE x$ can be computed in roughly $O(n\log n)$ time for
any $x\in\R^n$. Thus, multiplying $\SE A$ by iterating over columns of
$A$ gives the desired speedup.

The $O(nd\log n)$ running time of the above scheme to compute
$\SE A$
seems almost linear, and thus nearly optimal, since the input
size is already $nd$ to describe $A$. While this is
true for dense $A$, in many practical instances one often expects the
input matrix $A$
to be sparse, in which case linear time in the input description
actually means $O(\nnz(A))$, where $\nnz(\cdot)$ is the number of
non-zero entries. For example consider the case of $A$ being the
Netflix matrix, where $A_{i,j}$ is user $i$'s score for movie $j$:
$A$ is very sparse since most users do not watch, let
alone score, most movies \cite{ZWSP08}. 

In a recent beautiful and surprising work,
\cite{CW12} showed that there exist OSE's with $m =
\mathrm{poly}(d/\eps)$, and where every matrix $\SE$ in the support of
the distribution is {\it very} sparse: even with only $s=1$ non-zero
entries per column! Thus one can transform, for example, an $n\times
d$ least squares regression problem into a $\mathrm{poly(d/\eps)}\times
d$ regression problem in $\nnz(A)$ time. They gave two sparse
OSE constructions: one with $m = \tilde{O}(d^4/\eps^4),s=1$,
and another with $m = \tilde{O}(d^2/\eps^4), s =
O((\log d)/\eps)$.\footnote{Recently after sharing the statement of our bounds with the
authors of \cite{CW12}, independently of our methods they have been
able to push their own methods further to obtain $m =
O((d^2/\eps^2)\log^6(d/\eps))$ with $s=1$, nearly matching our
bound, though only for the $s=1$ case. This improves the two bounds in
the topmost row of \Figure{bounds} under the \cite{CW12} reference
to come within $\polylog d$ or $\polylog k$ factors of the two bounds in our topmost row.} The second
construction is advantageous
when $d$ is
larger as a function of $n$ and one is willing to slightly worsen the
$\nnz(A)$ term in the running time for a gain in the input size of the final
regression problem. 

We also remark that the analyses given of both constructions in
\cite{CW12} require
$\Omega(d)$-wise independent hash functions,
so that from the $O(d)$-wise independent seed used to generate $\SE$
naively one needs an additive $\Omega(d)$ time to identify the
non-zero entries in each
column just to evaluate the hash function. In streaming applications
this can be improved to 
additive $\tilde{O}(\log^2 d)$ time using fast multipoint evaluation
of polynomials (see \cite[Remark 16]{KNPW11}), though ideally if $s = 1$
one could hope for a construction that allows one to find, for any
column, the non-zero entry in that column in constant time given only
a short seed that specifies $\SE$ (i.e.\ without writing down $\SE$
explicitly in memory, which could be
prohibitively expensive for $n$ large in applications such as
streaming and out-of-core numerical linear algebra). Recall that in
the entry-wise
turnstile streaming model, $A$ receives entry-wise updates of the form
$((i,j),v)$, which cause the change $A_{i,j} \leftarrow A_{i,j} +
v$. Updating the embedding thus amounts to adding $v$ times the $j$th
row of $\SE$ to $\SE A$, which should ideally take $O(s)$ time and not
$O(s) + \tilde{O}(\log^2 d)$. 

In the following paragraph we let $S_\SE$ be the
space required to store $\SE$ implicitly (e.g.\ store the seed to
some hash function that specifies $\SE$). We let $t_c$ be
the running time required by an algorithm which, given a column index
and the length-$S_\SE$ seed specifying $\SE$,
returns the list of all non-zeroes in that column in $\SE$. 

\paragraph{Our Main Contribution:} We give an improved analysis of the
$s=1$ OSE in \cite{CW12} and show that it actually achieves
$m=O(d^2/\eps^2), s = 1$. Our analysis is near-optimal since
$m=\Omega(d^2)$ is required for any OSE with $s=1$ \cite{NN12b}.
Furthermore, for this construction we show $t_c = O(1), S_\SE = O(\log
(nd))$.
We also show that the two sparse
Johnson-Lindenstrauss constructions of \cite{KN12} both yield OSE's
that
allow for the parameter settings $m = \tilde{O}(d/\eps^2), s =
\mathrm{polylog}(d)/\eps,t_c = \tilde{O}(s), S_\SE = O(\log d\log
(nd))$ or $m = O(d^{1+\gamma}/\eps^2), s =
O_\gamma(1/\eps), t_c = O((\log d)/\eps), S_\SE = O(\log d\log (nd))$
for any desired constant $\gamma>0$. This $m$ is
nearly optimal since $m\ge d$ is required simply to ensure that no
non-zero vector in the subspace lands in the kernel of $\SE$. Plugging our
improved OSE's into previous work implies faster algorithms for
several numerical linear algebra problems, such as approximate
least squares regression, low rank approximation, and approximating
leverage scores. We remark that both of the OSE's in this work and
\cite{CW12} with $s \gg 1$
have the added benefit of preserving any subspace with $1/\poly(d)$,
and not just constant, failure probability.

\subsection{Problem Statements and Bounds}
\begin{center}
\begin{figure*}
\begin{center}
\begin{footnotesize}
\begin{tabular}{|l|l|l|l|}
\hline
reference & regression & leverage scores & low rank approximation
\\ \hline \cite{CW12} & $O(\nnz(A)) + \tilde{O}(d^5)$&
& $O(\nnz(A))$ + $\tilde{O}(nk^5)$
\\ & $O(\nnz(A)\log n)$ + $\tilde{O}(r^3)$ &
$\tilde{O}(\nnz(A) + r^3)$ & $O(\nnz(A)\log k)$ +
$\tilde{O}(nk^2)$
\\ \hline this work & $O(\nnz(A) + d^3\log d)$ &  & $O(\nnz(A))+\tilde{O}(nk^2)$
\\ & $\tilde{O}(\nnz(A)  + r^\omega)$ &
$\tilde{O}(\nnz(A) + r^{\omega})$ &
$O(\nnz(A)\log^{O(1)} k)+\tilde{O}(nk^{\omega-1})$
\\ & & & $O(\nnz(A))+\tilde{O}(nk^{\omega-1+\gamma})$
\\\hline
\end{tabular}
\end{footnotesize}
\caption{The improvement gained in running times by using our
  OSE's. Dependence on $\eps$ suppressed for readability;
  see \Section{apps} for dependence.}\FigureName{bounds}
\end{center}
\end{figure*}
\end{center}

We now formally define all numerical linear algebra problems we
consider. Plugging our new OSE's into previous algorithms for the
above problems yields the bounds in \Figure{bounds}; the value $r$
used in bounds denotes $\rank(A)$.

\bigskip

\paragraph{Approximating Leverage Scores:} A $d$-dimensional
subspace $W\subseteq \R^n$ can be
written as $W = \{x : \exists y\in\R^d, x = Uy\}$ for
some $U\in\R^{n\times d}$ with orthonormal columns. 
The squared Euclidean norms of rows of $U$ are unique up to
permutation, i.e.\ they depend only on $A$, and are known as the {\it
  leverage scores} of $A$. Given $A$, we
would like to output a list of its leverage scores up to $1\pm\eps$.

\paragraph{Least Squares Regression:} Given $A\in\R^{n\times d}$,
$b\in\R^n$, compute $\tilde{x}\in\R^d$ so that $\|A\tilde{x} - b\| \le
(1+\eps)\cdot \min_{x\in \R^d}\|Ax - b\|$.

\paragraph{Low Rank Approximation:} Given $A\in\R^{n\times d}$ and
integer $k>0$,
compute $\tilde{A}_k\in\R^{n\times d}$ with $\mathrm{rank}(\tilde{A})
\le k$ so that $\|A - \tilde{A}_k\|_F \le (1+\eps) \cdot
\min_{\mathrm{rank}(A_k) \le k} \|A - A_k\|_F$, where $\|\cdot\|_F$
is Frobenius norm.

\subsection{Our Construction and Techniques}
The $s=1$ construction is simply the TZ sketch
\cite{TZ12}. This matrix $\SE$ is specified by a random hash
function $h:[d]\rightarrow[n]$ and a random $\sigma\in\{-1,1\}^d$. For
each $i\in[d]$ we set $\SE_{h(i), i} = \sigma_i$, and every other
entry in $\SE$ is set to zero.
Observe any $d$-dimensional subspace $W\subseteq \R^n$ can be
written as $W = \{x : \exists y\in\R^d, x = Uy\}$ for
some $U\in\R^{n\times d}$ with orthonormal columns. 
The analysis of the $s=1$ construction in \cite{CW12} worked roughly
as follows: let $\mathcal{I}\subset [n]$ denote the set of ``heavy''
rows, i.e.\ those rows $u_i$ of $U$ where $\|u_i\|$ is ``large''. We
write $x =
x_{\mathcal{I}} + x_{[n]\backslash \mathcal{I}}$, where $x_S$ for a set
$S$ denotes $x$ with all coordinates in $[n]\backslash S$ zeroed out.
Then $\|x\|^2 = \|x_{\mathcal{I}}\|^2 +
\|x_{[n]\backslash\mathcal{I}}\|^2 + 2\inprod{x_{\mathcal{I}},
  x_{[n]\backslash\mathcal{I}}}$. The argument in \cite{CW12}
conditioned on $\mathcal{I}$ being perfectly hashed by $h$ so that
$\|x_{\mathcal{I}}\|^2$ is preserved exactly. Using an approach in
\cite{KN10,KN12} based on the Hanson-Wright inequality \cite{HW71}
together with a net argument, it was argued that
$\|x_{[n]\backslash\mathcal{I}}\|^2$ is preserved simultaneously for
all $x\in W$; this step required $\Omega(d)$-wise independence to
union bound over the net. A simpler concentration argument was used to
handle the $\inprod{x_{\mathcal{I}}, x_{[n]\backslash\mathcal{I}}}$
term. The construction in \cite{CW12} with smaller $m$ and larger $s$
followed a similar but more complicated analysis; that construction
involving hashing
into buckets and using the sparse Johnson-Lindenstrauss matrices of
\cite{KN12} in each bucket.

Our analysis is completely
different. First, just as in the TZ sketch's application to $\ell_2$
estimation in data streams, we only require $h$ to be
pairwise independent and $\sigma$ to be $4$-wise independent. Our
observation is simple: a matrix $\SE$
preserving the Euclidean norm of all vectors $x\in W$ up to $1\pm\eps$
is equivalent to the statement $\|\SE U y\| = (1\pm\eps)\|y\|$
simultaneously for all $y\in\R^d$. This is equivalent to 
all singular values of $\SE U$ lying
in the interval $[1-\eps, 1+\eps]$.\footnote{Recall that the singular
  values of a (possibly rectangular) matrix $B$ are the square roots
  of the eigenvalues of $B^*B$, where $(\cdot)^*$ denotes conjugate
  transpose.} Write $S = (\SE U)^* \SE U$, so that we want to show all
eigenvalues values of $S$ lie in $[(1-\eps)^2, (1+\eps)^2]$. We can
trivially write $S = I + (S-I)$, and thus by Weyl's inequality (see a
statement in \Section{analysis}) all eigenvalues of $S$ are $1\pm \|S
- I\|$. We thus show that $\|S-I\|$ is small with good
probability. By Markov's inequality
$$\Pr(\|S-I\|\ge t) = \Pr(\|S-I\|^2 \ge t^2) \le t^{-2}\cdot\E\|S-I\|^2 \le t^{-2}
\cdot\E \|S-I\|_F^2 .$$
Bounding this latter quantity is a simple calculation and fits
in under a page (\Theorem{main}).

The two constructions with smaller $m \approx d/\eps^2$ are the
sparse
Johnson-Lindenstrauss matrices of \cite{KN12}. In particular, the only
properties we need from our OSE in our analyses are the following. Let
each matrix in the support of the OSE have entries in
$\{0,1/\sqrt{s},-1/\sqrt{s}\}$. For a randomly drawn $\SE$, let
$\delta_{i,j}$ be an indicator random variable for the event
$\SE_{i,j} \neq 0$, and write $\SE_{i,j} =
\delta_{i,j}\sigma_{i,j}/\sqrt{s}$, where the $\sigma_{i,j}$ are
random signs. Then the properties we need are
\begin{itemize}
\item For any $j\in [n]$, $\sum_{i=1}^m \delta_{i,j} = s$ with
  probability $1$.
\item For any $S\subseteq [m]\times [n]$, $\E \prod_{(i,j) \in S}
  \delta_{i,j} \le (s/m)^{|S|}$.
\end{itemize}
The second property says the $\delta_{i,j}$ are negatively
correlated.
We call any matrix drawn from an OSE with the above properties
an {\it oblivious sparse norm-approximating projection} (OSNAP).

The work of \cite{KN12} gave two OSNAP distributions, either of which
suffice for our current OSE problem. In the first
construction, each column is chosen to have exactly $s$ non-zero
entries in random locations, each equal to $\pm 1/\sqrt{s}$ uniformly
at random. For our purposes the signs $\sigma_{i,j}$ need only
be $O(\log
d)$-wise independent, and each column can be specified by a $O(\log
d)$-wise
independent permutation, and the seeds specifying the permutations in
different columns need only be $O(\log d)$-wise independent.
In the second construction we pick hash functions
$h:[d]\times [s]\rightarrow[m/s]$, $\sigma:[d]\times
[s]\rightarrow\{-1,1\}$, both $O(\log d)$-wise independent, and thus
each representable using $O(\log d\log nd)$ random bits. For each
$(i,j)\in [d]\times [s]$ we set $\SE_{(j-1)s + h(i,j),i} =
\sigma(i,j)/\sqrt{s}$, and all other entries in $\SE$ are set to zero.
Note also that the TZ sketch is itself an OSNAP with $s=1$.

Just as in the TZ sketch, it suffices to show some tail bound: that
$\Pr(\|S - I\| > \eps')$ is small for some $\eps' = O(\eps)$, where $S
= (\SE U)^*\SE U$. 
Note that if the eigenvalues of $S-I$ are
$\lambda_1,\ldots,\lambda_d$, then the eigenvalues of 
$(S-I)^\ell$ are $\lambda_1^\ell,\ldots,\lambda_d^\ell$. Thus for
$\ell$ even, 
$\tr((S-I)^\ell) = \sum_{i=1}^d \lambda_i^\ell$
is an upper bound on $\|S - I\|^\ell$.
Thus by Markov's inequality with $\ell$ even,
\begin{equation}
\Pr(\|S-I\| \ge t) = \Pr(\|S-I\|^\ell \ge t^\ell) \le t^{-\ell}
\cdot \E \|S - I\|^\ell \le t^{-\ell} \cdot \E \tr((S-I)^\ell) .
\EquationName{trace-strategy}
\end{equation}

Our proof works by expanding the expression $\tr((S-I)^\ell)$ and
computing its expectation. This expression is a sum of exponentially
many monomials, each involving a product of $\ell$ terms. Without
delving into technical details at this point, each such monomial can
be thought of as being in correspondence with some undirected
multigraph (see the dot product multigraphs in the proof of
\Theorem{main-moments}). We group monomials corresponding to the same
graph, bound the contribution from each graph separately, then sum over
all graphs. Multigraphs whose edges all have even multiplicity turn
out to be easier to handle (\Lemma{evengraphs}). However most graphs
$G$ do not have this property. Informally speaking, the contribution
of a graph turns out to be related to the product over its edges of
the contribution of that edge. Let us informally call this
``contribution'' $F(G)$. Thus if $E'\subset E$ is a subset of
the edges of $G$, we can write $F(G) \le F((G|_{E'})^2)/2 +
F((G|_{E\backslash E'})^2)/2$ by AM-GM, where squaring a multigraph
means duplicating every edge, and $G|_{E'}$ is $G$ with all edges in
$E\backslash E'$ removed. This reduces back to the case of even edge
multiplicities, but unfortunately the bound we desire on $F(G)$
depends exponentially on the number of connected components of
$G$. Thus this step is bad, since if $G$ is connected,
then one of $G|_{E'},G|_{E'\backslash E}$ can have {\it many} connected
components for any choice of $E'$. For example if $G$ is a cycle on $N$
vertices, for $E'$ a single edge
almost every vertex in $G_{E'}$
is in its own connected component, and even if $E'$ is every
odd-indexed edge then the number of components blows up to $N/2$. Our
method to overcome this is to show that any $F(G)$ is bounded by some
$F(G')$ with the property that every connected component of $G'$ has
two edge-disjoint spanning trees. We then put one such spanning tree
into $E'$ for each component, so that $G|_{E\backslash E'}$ and $G|_{E'}$
both have the same number of connected components as $G$.

Our approach follows the classical moment method in random
matrix theory; see \cite[Section 2]{Tao12} or \cite{Vershynin12}
introductions to this area. In particular, our approach is inspired by
one taken by Bai and Yin \cite{BY93}, who in our notation were
concerned with the case $n=d$, $U = I$, $\SE$ dense. Most of the
complications in our proof arise because $U$ is not the identity
matrix, so that rows of $U$ are not orthogonal. For example, in the
case of $U$ having orthogonal rows all graphs $G$ in the last
paragraph have no edges other than self-loops and are trivial to
analyze.

\section{Analysis}\SectionName{analysis}

In this section let the orthonormal columns of  $U\in \R^{n\times d}$
be denoted $u^1,\ldots,u^d$. Recall our goal is to show that all
singular values of $\SE U$ lie in the
interval $[1-\eps, 1+\eps]$ with probability $1-\delta$ over the
choice of $\SE$ as long as $s,m$ are sufficiently large. We assume
$\SE$ is an OSNAP with sparsity $s$.
As in \cite{BY93} we make use of
Weyl's inequality (see a proof in \cite[Section 1.3]{Tao12}).

\begin{theorem}[Weyl's inequality]
Let $M,H,P$ be $n\times n$ Hermitian matrices where $M$ has
eigenvalues $\mu_1\ge \ldots \ge \mu_n$, $H$ has eigenvalues $\nu_1
\ge \ldots \ge \nu_n$, and $P$ has eigenvalues $\rho_1\ge \ldots\ge
\rho_n$. Then $\forall\ 1\le i\le n$, it holds that
$ \nu_i + \rho_n \le \mu_i \le \nu_i + \rho_1 $.
\end{theorem}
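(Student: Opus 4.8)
The statement is the classical Weyl perturbation inequality, and it implicitly assumes $M = H + P$ (the only case in which the conclusion is meaningful and the only case used later in \Section{analysis}). The plan is to derive both inequalities from the Courant--Fischer min-max characterization of the eigenvalues of a Hermitian matrix: for Hermitian $M$ with eigenvalues $\mu_1 \ge \cdots \ge \mu_n$,
\[
\mu_i = \max_{\substack{V \subseteq \mathbb{C}^n \\ \dim V = i}} \ \min_{\substack{x \in V \\ \|x\| = 1}} x^* M x .
\]
I would quote this rather than reprove it, but it follows from the spectral theorem: fixing an orthonormal eigenbasis $w_1,\ldots,w_n$ of $M$, any $i$-dimensional $V$ meets $\mathrm{span}(w_i,\ldots,w_n)$ (dimensions sum to $n+1$), and a unit vector in that intersection has Rayleigh quotient $\le \mu_i$, while $V = \mathrm{span}(w_1,\ldots,w_i)$ attains $\mu_i$; the inner minimum is attained since the unit sphere of $V$ is compact and $x \mapsto x^*Mx$ is continuous.

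Given this characterization, the upper bound is immediate. Fix any $i$-dimensional subspace $V$ and any unit $x \in V$; then $x^* M x = x^* H x + x^* P x \le x^* H x + \rho_1$, using $x^* P x \le \rho_1$ for every unit $x$ (the $\mu_1$ case of the same variational principle, applied to $P$). Taking the minimum over unit $x \in V$ and then the maximum over $i$-dimensional $V$ yields $\mu_i \le \nu_i + \rho_1$.

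For the lower bound I would avoid redoing the computation and instead apply the upper bound to the decomposition $H = M + (-P)$. The eigenvalues of $-P$ are $-\rho_n \ge \cdots \ge -\rho_1$, so its top eigenvalue is $-\rho_n$, and the upper bound gives $\nu_i \le \mu_i + (-\rho_n)$, i.e.\ $\mu_i \ge \nu_i + \rho_n$. Combining the two directions gives $\nu_i + \rho_n \le \mu_i \le \nu_i + \rho_1$ for all $1 \le i \le n$, as claimed.

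There is no real obstacle: the only nontrivial ingredient is Courant--Fischer, which is standard and is precisely why the paper cites \cite[Section 1.3]{Tao12} instead of including a proof. The single point worth stating with care is that the inner minimum over the unit sphere of $V$ is genuinely attained, so that passing to inequalities between minima (and then maxima) is legitimate; once that is observed, the argument is two lines plus the $H = M + (-P)$ symmetry trick.
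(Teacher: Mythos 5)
Your proof is correct. The paper does not prove this theorem itself but cites \cite[Section 1.3]{Tao12}, and the proof there is the same Courant--Fischer argument you give (including the symmetry trick of applying the upper bound to $H = M + (-P)$ to get the lower bound), so your approach matches the paper's intended source. You are also right to flag that the statement implicitly assumes $M = H + P$ — without that hypothesis the claim is false, and it is exactly how the paper applies it (with $M = S$, $H = (1+\eps^2)I$, $P = S - (1+\eps^2)I$).
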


Let $S = (\SE U)^*\SE U$. 
Letting 
$I$ be the $d\times d$ identity matrix, Weyl's inequality
with $M = S$, $H = (1+\eps^2)I$, and $P = S-(1+\eps^2)I$ implies that
all the eigenvalues of
$S$ lie in the range $[1 + \eps^2 + \lambda_{min}(P),
1+\eps^2 + \lambda_{max}(P)]
\subseteq [1 + \eps^2 - \|P\|, 1 + \eps^2 + \|P\|]$,
where $\lambda_{min}(M)$ (resp.\ $\lambda_{max}(M)$) is the smallest
(resp.\ largest) eigenvalue of $M$. Since $\|P\| \le \eps^2 + \|S -
I\|$, it thus suffices to show
\begin{equation}
\Pr(\|S-I\| > 2\eps - \eps^2) < \delta ,
\EquationName{main-requirement}
\end{equation}
since $\|P\| \le 2\eps$ implies that
all eigenvalues of $S$ lie
in $[(1-\eps)^2, (1+\eps)^2]$.

Before proceeding with our proofs below, observe that
for all $k,k'$
\allowdisplaybreaks
\begin{align*}
S_{k,k'} &= \frac 1s\sum_{r=1}^m \left(\sum_{i=1}^n
  \delta_{r,i}\sigma_{r,i}u^k_i\right) \left(\sum_{i=1}^n
  \delta_{r,i}\sigma_{r,i}u^{k'}_i \right) \\
 {}& = \frac 1s\sum_{i=1}^n u^k_i u^{k'}_i\cdot \left(\sum_{r=1}^m
  \delta_{r,i}\right) + \frac 1s\sum_{r=1}^m \sum_{i\neq
      j}\delta_{r,i}\delta_{r,j}
    \sigma_{r,i} \sigma_{r,j} u^k_i u^{k'}_j\\
{}& = \inprod{u^k, u^{k'}} + \frac 1s\sum_{r=1}^m \sum_{i\neq
      j}\delta_{r,i}\delta_{r,j}
    \sigma_{r,i} \sigma_{r,j} u^k_i u^{k'}_j
\end{align*}

Noting $\inprod{u^k,u^k} = \|u^k\|^2 = 1$ and $\inprod{u^k, u^{k'}} =
0$ for $k\neq k'$, we have for all $k,k'$ 
\begin{equation}
(S - I)_{k,k'} = \sum_{r=1}^m \sum_{i\neq
      j}\delta_{r,i}\delta_{r,j}
    \sigma_{r,i} \sigma_{r,j} u^k_i u^{k'}_j .\EquationName{inprod}
\end{equation}

\begin{theorem}\TheoremName{main}
For $\SE$ an OSNAP with $s=1$ and $\eps\in(0,1)$,
with probability at least $1-\delta$ all singular
values of $\SE U$ are $1\pm\eps$
as long as $m \ge \delta^{-1}(d^2+d)/(2\eps -\eps^2)^2$,
$\sigma$ is $4$-wise independent, and $h$ is pairwise independent.
\end{theorem}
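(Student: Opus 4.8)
The plan is to carry out exactly the strategy sketched in \Section{intro}. By the Weyl's inequality reduction performed just above, it suffices to verify \eqref{eq:main-requirement}, i.e.\ $\Pr(\|S-I\| > 2\eps - \eps^2) < \delta$ where $S = (\SE U)^*\SE U$. Since $\|M\| \le \|M\|_F$ for any matrix $M$, Markov's inequality gives
$$\Pr(\|S-I\|\ge t) \le t^{-2}\,\E\|S-I\|^2 \le t^{-2}\,\E\|S-I\|_F^2 = t^{-2}\sum_{k,k'=1}^d \E(S-I)_{k,k'}^2,$$
so the whole theorem reduces to the bound $\E\|S-I\|_F^2 \le (d^2+d)/m$; taking $t = 2\eps-\eps^2$ and invoking $m\ge \delta^{-1}(d^2+d)/(2\eps-\eps^2)^2$ then finishes it.

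To compute the entrywise second moments I would start from \eqref{eq:inprod}. Because $s=1$, each column of $\SE$ has a single nonzero entry, so $\sum_{r=1}^m \delta_{r,i}\delta_{r,j}$ is the indicator $\chi_{ij}$ that columns $i$ and $j$ of $\SE$ hash to the same row, and \eqref{eq:inprod} becomes $(S-I)_{k,k'} = \sum_{i\neq j}\chi_{ij}\sigma_i\sigma_j u^k_i u^{k'}_j$, where $\sigma_i$ denotes the sign in column $i$. Expanding the square produces a sum over ordered pairs $(i,j)$ and $(i',j')$ with $i\ne j$, $i'\ne j'$. I would take the expectation over $\sigma$ first: by $4$-wise independence $\E[\sigma_i\sigma_j\sigma_{i'}\sigma_{j'}]$ is $0$ unless $\{i,j\}=\{i',j'\}$ as multisets, and given the constraints this forces $(i',j')\in\{(i,j),(j,i)\}$, in which case the expectation equals $1$. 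Then taking the expectation over $h$ and using pairwise independence, $\E[\chi_{ij}]=1/m$ (and $\chi_{ij}^2=\chi_{ij}$), leaving
$$\E(S-I)_{k,k'}^2 = \frac1m\sum_{i\ne j}(u^k_i)^2(u^{k'}_j)^2 + \frac1m\sum_{i\ne j}u^k_i u^{k'}_i u^k_j u^{k'}_j .$$

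It remains to bound these two sums using orthonormality of $U$. Dropping the restriction $i\ne j$ only increases the first sum (the omitted terms are nonnegative), so it is at most $\|u^k\|^2\|u^{k'}\|^2 = 1$; for the second sum, adding back the diagonal gives $\sum_{i\ne j}u^k_iu^{k'}_iu^k_ju^{k'}_j = \inprod{u^k,u^{k'}}^2 - \sum_i (u^k_i)^2(u^{k'}_i)^2 \le \inprod{u^k,u^{k'}}^2$, which is $1$ if $k=k'$ and $0$ otherwise. Hence $\E(S-I)_{k,k}^2 \le 2/m$ and $\E(S-I)_{k,k'}^2 \le 1/m$ for $k\ne k'$, so $\E\|S-I\|_F^2 \le (2d + d(d-1))/m = (d^2+d)/m$, completing the argument.

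There is no genuine obstacle here: this is the ``fits in under a page'' computation promised before the statement. The only points that require care are (i) confirming that $4$-wise independence of $\sigma$ is precisely what annihilates every monomial except the two matchings (anything weaker would leave uncontrolled cross terms), and (ii) correctly accounting for the $i\ne j$ (and $i'\ne j'$) restrictions when passing to unrestricted sums, so that the discarded or re-added terms carry the right sign. Everything else is Markov's inequality together with $\|u^k\|=1$ and $\inprod{u^k,u^{k'}}=0$ for $k\ne k'$.
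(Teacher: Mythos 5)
Your proposal is correct and follows exactly the paper's strategy: reduce to $\Pr(\|S-I\|>2\eps-\eps^2)<\delta$ via Weyl, apply Markov to $\E\|S-I\|_F^2$, expand $\E(S-I)_{k,k'}^2$ using $4$-wise independence of $\sigma$ and pairwise independence of $h$, and use $\inprod{u^k,u^{k'}}^2 = \sum_i(u^k_i)^2(u^{k'}_i)^2 + \sum_{i\ne j}u^k_iu^{k'}_iu^k_ju^{k'}_j$ to control the cross-terms, arriving at $\E\|S-I\|_F^2 \le (d^2+d)/m$. The bookkeeping via the collision indicator $\chi_{ij}$ is a cosmetic difference; the substance matches the paper's proof.
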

\begin{proof}
We show \Equation{main-requirement}.
Our approach is to bound $\E \|S - I\|^2$ then use Markov's
inequality. Since
\begin{equation}
\Pr(\|S-I\| > 2\eps - \eps^2) = \Pr(\|S-I\|^2 > (2\eps - \eps^2)^2)
\le (2\eps - \eps^2)^{-2}\cdot \E\|S-I\|^2 \le 
(2\eps - \eps^2)^{-2}\cdot \E\|S-I\|_F^2 ,\EquationName{second-moment}
\end{equation}
we can bound $ \E\|S - I\|_F^2$ to
show \Equation{main-requirement}.
Here $\|\cdot \|_F$ denotes Frobenius norm.

Now we bound $\E \|S - I\|_F^2$.
We first deal with the diagonal terms of $S-I$. By \Equation{inprod},
\begin{align*}
 \E (S-I)_{k,k}^2 & = \sum_{r=1}^m \sum_{i\neq j} \frac{2}{m^2}
(u^k_i)^2(u^k_j)^2\\
{} &\le \frac {2}{m} \cdot \|u^i\|^4\\
{} &= \frac 2m,
\end{align*}
and thus the diagonal terms in total contribute at most $2d/m$ to $\E
\|S-I\|_F^2$. 

We now focus on the off-diagonal terms. 
By \Equation{inprod}, $\E (S-I)_{k,k'}^2$ is equal to
\begin{align*}
\frac{1}{m^2}\sum_{r=1}^m \sum_{i\neq j} \left(
  (u^k_i)^2 (u^{k'}_j)^2 + u^k_i u^{k'}_i u^k_j u^{k'}_j \right)
 & =  \frac{1}{m} \sum_{i\neq j} \left(
  (u^k_i)^2 (u^{k'}_j)^2 + u^k_i u^{k'}_i u^k_j u^{k'}_j \right) .
\end{align*}
Noting $0 = \inprod{u^k, u^{k'}}^2 = \sum_{k=1}^n (u^k_i)^2
(u^{k'}_i)^2 + \sum_{i\neq j} u^k_iu^{k'}_iu^k_ju^{k'}_j$ we have that
$\sum_{i\neq j}u^k_i u^{k'}_i u^k_j u^{k'}_j \le 0$, so
\begin{align*}
 \E (S-I)_{k,k'}^2 & \le \frac{1}{m} \sum_{i\neq j} (u^k_i)^2
 (u^{k'}_j)^2\\
{} & \le \frac{1}{m} \|u^i\|^2 \cdot \|u^j\|^2 \\
{} & = \frac{1}{m} .
\end{align*}

Thus summing over $i\neq j$, 
the total contribution from off-diagonal terms to $\E \|S -
I\|_F^2$ is at most $d(d-1)/m$. Thus in total
$\E \|S - I\|_F^2 \le (d^2+d)/m$,
and so \Equation{second-moment} and our setting of $m$ gives
$$ \Pr\left(\|S - I\| > 2\eps - \eps^2\right) < \frac 1{(2\eps -
  \eps^2)^2}\cdot
\frac{d^2 + d}{m} \le  \delta .$$
\end{proof}

Before proving the next theorem, it is helpful to state a few facts
that we will repeatedly use. Recall that $u^i$ denotes the $i$th
column of $U$, and we will let $u_i$ denote the $i$th row of $U$.


\begin{lemma}\LemmaName{sumidentity}
$\sum_{k=1}^n u_ku_k^* =  I$.
\end{lemma}
\begin{proof}
$$ \left(\sum_{k=1}^n u_ku_k^*\right)_{i,j} = e_i^*\left(\sum_{k=1}^n
  u_ku_k^*\right)e_j =
\sum_{k=1}^n (u_k)_i(u_k)_j  = \inprod{u^i, u^j} ,$$
and this inner product is $1$ for $i=j$ and $0$ otherwise.
\end{proof}

\begin{lemma}\LemmaName{rownorm}
For all $i\in [n]$, $\|u_i\| \le 1$.
\end{lemma}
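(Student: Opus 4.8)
The statement to prove is Lemma \ref{lem:rownorm}: every row $u_i$ of $U$ has $\|u_i\| \le 1$. The plan is to exploit the fact that $U$ has orthonormal \emph{columns}, which means $U^*U = I_d$, but not necessarily $UU^* = I_n$ unless $d = n$. Nonetheless, one can still extract the needed bound. First I would observe that $U$ can be completed to an $n\times n$ orthogonal matrix $\tilde U = [U\ \ U']$ by appending $n-d$ further orthonormal columns spanning the orthogonal complement of the column space of $U$. Then $\tilde U$ is a genuine orthogonal matrix, so $\tilde U \tilde U^* = I_n$, and in particular each row of $\tilde U$ is a unit vector. The row $u_i$ of $U$ is a sub-vector (a coordinate projection) of the corresponding row of $\tilde U$, so $\|u_i\|^2 \le \|(\tilde U)_i\|^2 = 1$, giving the claim.

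An alternative, self-contained route that avoids invoking matrix completion is to use Lemma \ref{lem:sumidentity}, which is already available: $\sum_{k=1}^n u_k u_k^* = I$. Taking the trace of both sides gives $\sum_{k=1}^n \|u_k\|^2 = \tr(I_d) = d$; this alone is not enough, but combined with positive semidefiniteness it is. More directly, for the fixed index $i$, the matrix $\sum_{k=1}^n u_k u_k^* = I_d$ is a sum of PSD matrices, and dropping all terms except $k=i$ yields $u_i u_i^* \preceq I_d$ in the Loewner order. Hence the largest eigenvalue of $u_i u_i^*$, which is exactly $\|u_i\|^2$ (since $u_i u_i^*$ is rank one with nonzero eigenvalue $\|u_i\|^2$), is at most $1$. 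Either framing works; I would likely present the Loewner-order version since Lemma \ref{lem:sumidentity} is stated immediately above and the argument is two lines.

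There is really no substantial obstacle here — this is a standard fact about submatrices of orthogonal matrices, and the only thing to be careful about is not to accidentally assume $UU^* = I_n$, which is false when $d < n$. The cleanest phrasing, to make the one-line deduction airtight, is: $0 \preceq u_i u_i^* \preceq \sum_{k=1}^n u_k u_k^* = I$ by Lemma \ref{lem:sumidentity}, and comparing the top eigenvalues of the two ends gives $\|u_i\|^2 = \lambda_{\max}(u_i u_i^*) \le \lambda_{\max}(I) = 1$, whence $\|u_i\| \le 1$.
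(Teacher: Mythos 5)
Your first argument---completing $U$ to an $n\times n$ orthogonal matrix $\tilde U$ and observing that $u_i$ is a coordinate sub-vector of the unit-norm row $(\tilde U)_i$---is exactly the paper's proof. The Loewner-order alternative you prefer is also correct and genuinely different: from $\sum_k u_k u_k^* = I$ (Lemma~\ref{lem:sumidentity}) and the positive semidefiniteness of each summand you get $u_i u_i^* \preceq I$, hence $\|u_i\|^2 = \lambda_{\max}(u_i u_i^*) \le 1$. That version is arguably tidier here since it reuses the lemma stated immediately above instead of invoking an extension of $U$, though both are one-liners and the paper's choice is equally standard.
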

\begin{proof}
We can extend $U$ to some orthogonal matrix $U'\in\R^{n\times n}$ by
appending $n - d$ columns. For the rows $u'_i$ of $U'$ we then have
$\|u_i\| \le \|u'_i\| = 1$.
\end{proof}

\begin{theorem}[{\cite{NashWilliams61,Tutte61}}]\TheoremName{spantrees}
A multigraph $G$ has $k$ edge-disjoint spanning trees iff
$$ |E_P(G)| \ge k(|P| - 1)$$
for every partition $P$ of the vertex set of $G$, where $E_P(G)$ is
the set of edges of $G$ crossing between two different partitions
in $P$.
\end{theorem}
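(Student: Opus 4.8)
The plan is to handle the easy (necessity) direction by a direct counting argument and to obtain the hard (sufficiency) direction by reducing to the matroid union theorem for the graphic matroid of $G$.

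For necessity, suppose $T_1,\dots,T_k$ are edge-disjoint spanning trees of $G$ and let $P$ be a partition of the vertex set with $|P|=t$. Contracting each part of $P$ to a single vertex turns each $T_i$ into a connected multigraph on $t$ vertices, which must have at least $t-1$ edges, and those surviving edges are precisely the edges of $T_i$ lying in $E_P(G)$. Summing over $i$ and using edge-disjointness of the $T_i$ gives $|E_P(G)|\ge k(t-1)$.

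For sufficiency, first note that applying the hypothesis to a bipartition $P=\{S,\,V\setminus S\}$ with $\emptyset\neq S\subsetneq V$ shows $G$ is connected. Let $M(G)$ be the graphic matroid on ground set $E=E(G)$: its independent sets are the forests of $G$, and its rank function is $r(Y)=|V|-c(Y)$, where $c(Y)$ is the number of connected components of the spanning subgraph $(V,Y)$; in particular $r(E)=|V|-1$. By the matroid union theorem applied to $k$ copies of $M(G)$, the largest number of edges in a subgraph of $G$ that decomposes into $k$ edge-disjoint forests equals $\min_{Y\subseteq E}\bigl(|E\setminus Y|+k\,r(Y)\bigr)$; since any such forest has at most $|V|-1$ edges and a spanning tree has exactly $|V|-1$, the graph $G$ has $k$ edge-disjoint spanning trees iff this minimum is at least $k(|V|-1)$, i.e.\ iff $|E\setminus Y|\ge k\,(c(Y)-1)$ for every $Y\subseteq E$. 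To verify this last inequality from the hypothesis, fix $Y$ and let $P$ be the partition of $V$ into the connected components of $(V,Y)$, so $|P|=c(Y)$; every edge of $Y$ joins two vertices of the same component and hence lies inside a single part, so $Y\cap E_P(G)=\emptyset$ and $|E\setminus Y|\ge |E_P(G)|\ge k(|P|-1)=k\,(c(Y)-1)$, as needed.

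I expect the main obstacle to be the matroid union theorem itself, which the argument above uses as a black box. A self-contained proof would instead run the classical augmentation scheme: keep $k$ edge-disjoint forests $F_1,\dots,F_k$ of maximum total size, and if some $F_i$ is not a spanning tree, build the auxiliary exchange digraph recording which forest each edge can be moved into after deleting a suitable edge of the cycle it creates, then perform an augmenting sequence of swaps. The delicate step is showing that when no augmentation is possible, the set of vertices reachable along exchange paths assembles into a partition $P$ certifying $|E_P(G)|<k(|P|-1)$, contradicting the hypothesis; arranging that partition and carrying out the edge count correctly is where the real work lies.
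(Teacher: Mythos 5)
The paper does not prove \Theorem{spantrees}; it cites it as the classical Nash--Williams/Tutte theorem from the literature and uses it only as a black box (to derive \Corollary{twotrees}). There is therefore no "paper proof" to compare against, but your argument is a correct and standard proof of the result. The necessity direction by contracting parts and counting edges of each tree that survive is exactly right, and the sufficiency direction correctly reduces the existence of $k$ edge-disjoint spanning trees to the matroid union (Nash-Williams/Edmonds) formula $\min_{Y\subseteq E}\bigl(|E\setminus Y|+k\,r(Y)\bigr)$ for $k$ copies of the graphic matroid with $r(Y)=|V|-c(Y)$: the algebra turning $|E\setminus Y|+k(|V|-c(Y))\ge k(|V|-1)$ into $|E\setminus Y|\ge k(c(Y)-1)$ is correct, the choice of $P$ as the component partition of $(V,Y)$ gives $Y\cap E_P(G)=\emptyset$ so the hypothesis yields the needed bound, and the preliminary bipartition observation that $G$ is connected (so that each forest of size $|V|-1$ is in fact a spanning tree) closes the gap. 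The only caveat is the one you already flag: matroid union is itself a nontrivial theorem, so this is not a self-contained proof; your sketch of the augmenting-path route is the right way to make it self-contained, and as you note the hard part there is extracting the certifying partition from the failure of augmentation. For what the paper needs, citing the theorem (as the paper does) is the appropriate move, but your proof is sound.
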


The following corollary is standard, and we will later only need it
for the case $k=2$.

\begin{corollary}\CorollaryName{twotrees}
Let $G$ be a multigraph formed by removing at most $k$ edges from a
multigraph $G'$ that has edge-connectivity at least $2k$. Then $G$ must
have at least $k$
edge-disjoint spanning trees.
\end{corollary}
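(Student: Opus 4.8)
The statement to prove is Corollary~\ref{cor:twotrees}: if $G$ is obtained from a multigraph $G'$ with edge-connectivity at least $2k$ by deleting at most $k$ edges, then $G$ has $k$ edge-disjoint spanning trees.

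The plan is to apply the Nash-Williams/Tutte characterization (Theorem~\ref{thm:spantrees}) directly. So I would fix an arbitrary partition $P$ of the common vertex set of $G$ and $G'$ and aim to verify the inequality $|E_P(G)| \ge k(|P|-1)$. The key observation is that deleting $k$ edges from $G'$ can decrease the number of crossing edges of \emph{any} partition by at most $k$, so $|E_P(G)| \ge |E_P(G')| - k$. Thus it suffices to show $|E_P(G')| \ge k(|P|-1) + k = k|P|$ whenever $|P| \ge 2$ (the case $|P| = 1$ is trivial since the right-hand side is $0$).

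The heart of the argument is therefore the claim that a $2k$-edge-connected multigraph $G'$ has at least $k|P|$ edges crossing any partition $P$ into $t := |P| \ge 2$ parts. For this I would use a counting/averaging argument over the parts: for each block $B \in P$, by $2k$-edge-connectivity the cut separating $B$ from the rest of the vertex set has at least $2k$ edges (here I should note $B \ne V$ since $t \ge 2$, and $B \ne \emptyset$, so this cut is a genuine cut to which edge-connectivity applies). Every such edge is a crossing edge of $P$, and summing over all $t$ blocks counts each crossing edge exactly twice (once from each of the two blocks it touches). Hence $2|E_P(G')| \ge 2kt$, giving $|E_P(G')| \ge kt = k|P|$. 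Combining with the previous paragraph, $|E_P(G)| \ge k|P| - k = k(|P|-1)$ for every partition $P$, and Theorem~\ref{thm:spantrees} then yields the $k$ edge-disjoint spanning trees.

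I do not anticipate a genuine obstacle here — the only points requiring care are the boundary cases ($|P|=1$, and making sure every block of a partition with $|P|\ge 2$ induces a nontrivial cut so that the edge-connectivity bound is applicable) and the double-counting bookkeeping. A short remark could also be included that the hypothesis is tight in the sense that $2k$-edge-connectivity of $G'$ cannot be weakened to $(2k-1)$-edge-connectivity if one wants to tolerate $k$ deletions, but this is not needed for the proof.
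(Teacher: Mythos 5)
Your proof is correct and follows the same route as the paper's own proof: both apply the Nash-Williams/Tutte criterion, use $2k$-edge-connectivity together with double-counting across blocks to get $|E_P(G')| \ge k|P|$, and observe that removing at most $k$ edges costs at most $k$ crossing edges. Your write-up is simply more careful about the $|P|=1$ boundary case and the double-counting bookkeeping, which the paper leaves implicit.
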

\begin{proof}
For any partition $P$ of the vertex set,
each partition must have at least $2k$ edges leaving it in $G'$. Thus
the
number of edges crossing partitions must be at least $k|P|$ in $G'$,
and thus at least $k|P| - k$ in $G$. \Theorem{spantrees}
thus implies that $G$ has $k$ edge-disjoint spanning trees.
\end{proof}

\begin{fact}\FactName{opnorm}
For any matrix $B\in \mathbb{C}^{d\times d}$, $\|B\| =
\sup_{\|x\|,\|y\| = 1} x^* B y$.
\end{fact}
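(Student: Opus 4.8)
The plan is to reduce the two-sided supremum to the familiar variational characterization of the $\ell_2$ norm of a vector, applied twice. Recall that throughout the paper $\|\cdot\|$ on a matrix argument denotes the $\ell_{2\to 2}$ operator norm, i.e.\ $\|B\| = \sup_{\|y\| = 1}\|By\|$. So it suffices to establish the elementary identity $\|v\| = \sup_{\|x\| = 1} x^*v$ for every vector $v\in\mathbb{C}^d$, since then
\[
\|B\| = \sup_{\|y\| = 1}\|By\| = \sup_{\|y\| = 1}\ \sup_{\|x\| = 1} x^*(By) = \sup_{\|x\|,\|y\| = 1} x^*By ,
\]
which is exactly the claim.

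For the vector identity, the upper bound $|x^*v| \le \|v\|$ for every unit vector $x$ is just Cauchy--Schwarz, giving $\sup_{\|x\| = 1} x^*v \le \|v\|$. For the matching lower bound, if $v = 0$ there is nothing to prove, and otherwise the unit vector $x = v/\|v\|$ satisfies $x^*v = v^*v/\|v\| = \|v\|$, so the supremum is attained and equals $\|v\|$. (In the complex case the right-hand side should be read as a supremum of real quantities, e.g.\ of $\mathrm{Re}(x^*v)$ or of $|x^*v|$; the witness $x = v/\|v\|$ makes $x^*v$ real and nonnegative, so all such readings agree. In the application of interest $U$, and hence $B = S - I$, is real, so this point does not arise.)

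I expect essentially no obstacle here: this is the standard fact that the operator norm of a matrix equals its largest value as a sesquilinear form on the unit ball (equivalently, its top singular value, with the explicit maximizers being the leading left and right singular vectors). The only thing that genuinely needs checking is that the two suprema are actually attained rather than merely approached, which the explicit choices $y$ a top right singular vector and $x = By/\|By\|$ settle; the minor bookkeeping point is to make sure the invoked notion of operator norm ($\ell_{2\to 2}$, as fixed at the start of the paper) is the one used in the first display above.
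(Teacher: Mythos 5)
Your argument is correct, and it differs from the paper's in a small but genuine way. The paper proves the upper bound the same way (submultiplicativity / Cauchy--Schwarz), but for the lower bound it appeals to the singular value decomposition $B = U\Sigma V^*$ and takes $x,y$ to be the leading left and right singular vectors, so that $x^*By = \sigma_1 = \|B\|$. You instead reduce to the one-vector identity $\|v\| = \sup_{\|x\|=1} x^*v$ (proved by Cauchy--Schwarz plus the witness $x = v/\|v\|$) and apply it twice, nesting the suprema. Your route is more elementary and avoids invoking the SVD; the paper's route is shorter given the SVD and has the side benefit of exhibiting the supremum explicitly as the top singular value, which fits the paper's running identification of $\|\SE U\|$ with singular values. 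Your parenthetical care about the complex case (reading the right-hand side as $\mathrm{Re}(x^*By)$ or $|x^*By|$, and noting the witness makes the quantity real and nonnegative) is a valid point that the paper's statement glosses over; both proofs in fact handle it because the constructed maximizers make $x^*By$ real and nonnegative.
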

\begin{proof}
We have $\sup_{\|x\|,\|y\| = 1}x^* B y \le \|B\|$ since $x^*
By \le \|x\| \cdot \|B\| \cdot \|y\|$. To show that unit norm $x,y$
exist which achieve $\|B\|$, let $B = U\Sigma V^*$ be the singular
value decomposition of $B$.
That is, $U,V$ are unitary and $\Sigma$ is diagonal with entries
$\sigma_1 \ge \sigma_2 \ge \ldots \sigma_d \ge 0$ so that $\|B\| =
\sigma_1$. We can then achieve $x^* B y = \sigma_1$ by letting
$x$ be the first column of $U$ and $y$ be the first column 
of $V$.
\end{proof}

\begin{theorem}\TheoremName{main-moments}
For $\SE$ an OSNAP  with
$s=\Theta(\log^3(d/\delta)/\eps)$ and
$\eps\in(0,1)$,
with probability at least $1-\delta$, all singular
values of $\SE U$ are $1\pm\eps$
as long as $m = \Omega(d\log^8(d/\delta)/\eps^2)$
and $\sigma,h$ are $\Omega(\log(d/\delta))$-wise independent.
\end{theorem}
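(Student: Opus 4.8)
The plan is to run the trace-moment strategy of \eqref{eq:trace-strategy}: fix an \emph{even} integer $\ell=\Theta(\log(d/\delta))$ and bound $\E\tr((S-I)^\ell)$, which dominates $\E\|S-I\|^\ell$ since $S-I$ is Hermitian and $\ell$ is even. By \eqref{eq:trace-strategy} and the reduction in \eqref{eq:main-requirement}, it then suffices to show $\E\tr((S-I)^\ell)<(2\eps-\eps^2)^\ell\cdot\delta$ for the stated $m,s$. Expanding the trace with $k_{\ell+1}=k_1$ and using \eqref{eq:inprod} (with the $1/s$ coming from $\SE_{i,j}=\delta_{i,j}\sigma_{i,j}/\sqrt s$), $\tr((S-I)^\ell)$ becomes
\[\frac1{s^\ell}\sum_{k_1,\dots,k_\ell\in[d]}\ \sum_{\substack{r_1,\dots,r_\ell\in[m]\\ i_t\neq j_t\in[n]}}\ \prod_{t=1}^\ell\delta_{r_t,i_t}\delta_{r_t,j_t}\,\sigma_{r_t,i_t}\sigma_{r_t,j_t}\,u^{k_t}_{i_t}u^{k_{t+1}}_{j_t}.\]
Each $k_t$ occurs in exactly the two factors $u^{k_t}_{i_t}$ and $u^{k_t}_{j_{t-1}}$, so summing each $k_t$ freely over $[d]$ collapses those factors into $\inprod{u_{i_t},u_{j_{t-1}}}$ (inner product of rows of $U$), leaving a sum over the $r_t$'s and the pairs $(i_t,j_t)$ weighted by $\prod_t\inprod{u_{i_t},u_{j_{t-1}}}$ times the $\delta$- and $\sigma$-products. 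Now take expectations. Since $\sigma$ is $\Omega(\log(d/\delta))$-wise (hence $\ge 2\ell$-wise) independent, $\E\prod_t\sigma_{r_t,i_t}\sigma_{r_t,j_t}\in\{0,1\}$ and equals $1$ only when every distinct $(\text{row},\text{column})$ pair occurring among $\{(r_t,i_t),(r_t,j_t)\}_t$ occurs an even number of times; and the OSNAP negative-correlation property gives $\E\prod_t\delta_{r_t,i_t}\delta_{r_t,j_t}\le(s/m)^q$, where $q$ is the number of distinct such pairs.

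Next I would group the surviving monomials by the isomorphism type of an associated multigraph --- the dot-product multigraph --- whose vertices encode the coincidence pattern of the coordinates $i_t,j_t$, whose $\ell$ edges record the inner products $\inprod{u_{i_t},u_{j_{t-1}}}$, and which is decorated with the information of which coordinates are forced to share a row. For a fixed graph $G$ the contribution factors (up to the number of monomials realizing $G$) into a product over the edges of $G$: the inner-product edges are controlled via $\|u_i\|\le1$ (\Lemma{rownorm}) and the resolution of identity $\sum_ku_ku_k^*=I$ (\Lemma{sumidentity}), while the shared-row structure supplies the $(s/m)^q$ factor. Calling the resulting per-graph bound $F(G)$, the goal becomes $\sum_G(\#\text{monomials realizing }G)\cdot F(G)<(2\eps-\eps^2)^\ell\delta$. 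When every edge of $G$ has even multiplicity this is routine: repeatedly contract low-degree coordinate-vertices using $\sum_ku_ku_k^*=I$ and bound leftover inner products by Cauchy--Schwarz, then count realizations; I would isolate this as a self-contained lemma. (For $U$ with orthogonal rows the only surviving edges are self-loops, so this easy case is everything; the non-orthogonality of the rows of a general $U$ is exactly what forces the hard case below.)

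The main obstacle is $G$ with edges of odd multiplicity. Here, writing $F(G)$ as essentially a product over edges, AM--GM gives for any edge subset $E'$ the bound $F(G)\le\tfrac12F((G|_{E'})^2)+\tfrac12F((G|_{E\setminus E'})^2)$, reducing to the even case --- but the even-case bound degrades exponentially in the number of connected components, and a careless $E'$ (a single edge, or the alternate edges, of a long cycle) blows that count up from $O(1)$ to $\Theta(\ell)$. The remedy I would carry out is the one flagged in the introduction: first show $F(G)$ is dominated by $F(G')$ for a multigraph $G'$ each of whose connected components admits two edge-disjoint spanning trees (via \Corollary{twotrees} with $k=2$, once one checks the relevant $G'$ is only a couple of edge-deletions away from a $4$-edge-connected graph), then for each component put one of its spanning trees into $E'$ and keep the other in $E\setminus E'$, so that $G'|_{E'}$, $G'|_{E\setminus E'}$, and hence their squares, all have exactly the connected components of $G$. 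This pins the component count through the reduction to the even case. Finally I would assemble the per-graph estimates, sum the resulting geometric-type series over all graph types, and substitute $\ell=\Theta(\log(d/\delta))$, $s=\Theta(\log^3(d/\delta)/\eps)$, $m=\Omega(d\log^8(d/\delta)/\eps^2)$ --- which make the per-edge error term $O(\eps/\polylog(d/\delta))$ --- to push $(2\eps-\eps^2)^{-\ell}\E\tr((S-I)^\ell)$ below $\delta$, which by \eqref{eq:main-requirement} completes the proof.
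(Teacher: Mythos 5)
Your proposal follows the paper's proof essentially step for step: the trace--moment bound \eqref{eq:trace-strategy}, the expansion of $\tr((S-I)^\ell)$ and collapse over the $k_t$'s into row dot products $\inprod{u_{i_t},u_{j_{t-1}}}$, the grouping by dot-product multigraph with the even-multiplicity condition for $\sigma$-survival and the $(s/m)^q$ bound from OSNAP negative correlation, the isolated lemma for even-multiplicity graphs via $\sum_k u_k u_k^*=I$ and Cauchy--Schwarz, the AM--GM reduction via a pair of edge-disjoint spanning trees per component (using Nash--Williams/Tutte through \Corollary{twotrees}), and the final parameter substitution. The only rough edge is in the parenthetical about \emph{why} the reduced graph $G'$ is good --- the paper reaches goodness by an iterative contraction of $2$-edge cuts (terminating because two-vertex Eulerian graphs are good), with \Corollary{twotrees} invoked along the way to control the operator norm of the contracted block --- but this is a detail of execution, not a different route.
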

\begin{proof}
We will again show \Equation{main-requirement}. 
Recall that by \Equation{trace-strategy} we have 
\begin{equation}
\Pr(\|S-I\| \ge t) \le t^{-\ell} \cdot \E \tr((S-I)^\ell)
\EquationName{trace-strategy2}
\end{equation}
 for $\ell$
any even integer. We thus proceed by bounding $\E \tr((S-I)^\ell)$
then applying \Equation{trace-strategy2}.

It is easy to verify by induction on $\ell$ that for any 
$B\in\R^{n\times n}$ and $\ell \ge 1$,
$$ (B^\ell)_{i,j} = \sum_{\substack{t_1,\ldots,t_{\ell+1}\in [n]\\t_1
= i, t_{\ell+1} = j}} \prod_{k=1}^\ell B_{t_k, t_{k+1}} ,\hbox{ and
thus }
 \tr(B^\ell) = \sum_{\substack{t_1,\ldots,t_{\ell+1}\in [n]\\t_1
= t_{\ell+1}}} \prod_{k=1}^\ell B_{t_k, t_{k+1}} .$$
Applying this identity to $B = S-I$ yields
\begin{equation}
 \E \tr((S-I)^\ell) = \frac 1{s^\ell}\cdot
\E \sum_{\substack{k_1,k_2,\ldots,k_{\ell+1}\\k_1 = k_{\ell+1}\\i_1
  \neq j_1,\ldots, i_\ell\neq j_\ell\\r_1,\ldots,r_\ell}} 
\prod_{t=1}^\ell \delta_{r_t,i_t} \delta_{r_t, j_t} \sigma_{r_t, i_t}
\sigma_{r_t, j_t}
u^{k_t}_{i_t} u^{k_{t+1}}_{j_t}.\EquationName{bigsum}
\end{equation}

The general strategy to bound the above summation is the
following. Let $\Psi$ be the set of all monomials appearing on the
right hand side of \Equation{bigsum}. 
For $\psi\in \Psi$ define $K(\psi) = (k_1,\ldots,k_\ell)$  as the ordered
tuple of $k_t$ values in $\psi$, and similarly define $P(\psi) =
((i_1,j_1),\ldots,(i_\ell,j_\ell))$ and $W(\psi) =
(r_1,\ldots,r_\ell)$. For each $\psi \in
\Psi$ we associate a three-layered undirected multigraph $G_\psi$ with
labeled edges and unlabeled vertices. We call
these three layers the {\it left}, {\it middle}, and {\it right}
layers, and we refer to vertices in the left layer as {\it
  left vertices}, and similarly for vertices in the other
layers. Define $M(\psi)$ to be the set
$\{i_1,\ldots,i_\ell,j_1,\ldots,j_\ell\}$ and define $R(\psi) =
\{r_1,\ldots,r_\ell\}$. We define $y = |M(\psi)|$ and $z =
|R(\psi)|$. Note it can happen that $y < 2\ell$ if some $i_t =
i_{t'}$, $j_t = j_{t'}$, or $i_t = j_{t'}$, and similarly we may also
have $z < \ell$. The graph $G_\psi$ has
$x=\ell$ left vertices, $y$ middle vertices
corresponding to the distinct $i_t,j_t$ in $\psi$,
and $z$ right
vertices corresponding to the distinct $r_t$. For the sake of brevity,
often we refer to the vertex
corresponding to $i_t$ (resp.\ $j_t,r_t$) as simply $i_t$ (resp.\
$j_t,
r_t$). Thus note that when we refer to for example some vertex
$i_t$, it may happen that some other $i_{t'}$ or $j_{t'}$ is also
the same vertex. We now describe the edges of $G_\psi$. For
$\psi = \prod_{t=1}^\ell \delta_{r_t,i_t} \delta_{r_t, j_t}
\sigma_{r_t, i_t} \sigma_{r_t, j_t} u^{k_t}_{i_t} u^{k_{t+1}}_{j_t}$
we draw $4\ell$ labeled edges in $G_{\psi}$ with distinct labels in
$[4\ell]$. For each $t\in [\ell]$ we draw an edge from the $t$th left
vertex to $i_t$
with label $4(t-1)+1$, from $i_t$ to $r_t$ with label $4(t-1) + 2$,
from $r_t$ to $j_t$ with label $4(t-1) + 3$, and from $j_t$ to
the $(t+1)$st left vertex with label $4(t-1) + 4$. Observe that many
different monomials $\psi$ will map to the same graph $G_{\psi}$; in
particular the graph maintains no information concerning equalities
amongst the $k_t$, and
the $y$ middle vertices may map to any $y$ distinct values in
$[n]$ (and similarly the right vertices may map to any $z$
distinct values in $[m]$). We handle the right
hand side of \Equation{bigsum} by grouping monomials $\psi$ that map
to the same graph, bound the total contribution of a given graph
$G$ in terms of its graph structure when summing over all $\psi$ with
$G_\psi = G$, then sum the contributions from all such graphs $G$
combined.

Before continuing further we introduce some more notation then make a
few
observations. For a graph $G$ as above, recall $G$ has
$4\ell$ edges, and we refer
to the {\it distinct} edges (ignoring labels) as {\it bonds}. 
We let $E(G)$ denote the edge multiset of a multigraph $G$ and $B(G)$ denote the
bond set. We refer to the number of bonds a vertex is incident upon as
its {\it bond-degree}, and the number of edges as its {\it
  edge-degree}. We do not count self-loops for calculating
bond-degree, and we count them twice for edge-degree.
We let $\LM(G)$ be the induced multigraph on the left and middle
vertices of $G$, and $\MR(G)$ be the induced multigraph on the middle and
right vertices. 
We let $w = w(G)$
be the number of connected components in $\MR(G)$. 
We let
$b = b(G)$ denote the number of bonds in $\MR(G)$ (note $\MR(G)$ has
$2\ell$
edges, but it may happen that $b<2\ell$ since $G$ is a multigraph).
Given $G$ we define
the undirected {\it dot product multigraph} $\widehat{G}$ with vertex
set $M(\psi)$.
Note every left vertex of $G$ has edge-degree $2$. 
For each $t\in [\ell]$ an edge $(i,j)$ is drawn in $\widehat{G}$
between the two
middle vertices that the $t$th left vertex is adjacent to (we draw a
self-loop on $i$ if
$i=j$). We do not label the edges of $\widehat{G}$, but we label the
vertices with distinct labels in $[y]$ in increasing order of when
each vertex was first visited by the natural tour of $G$ (by following
edges in increasing label order).
We name $\widehat{G}$ the dot product multigraph since if some left
vertex $t$ has its two edges connecting to vertices $i,j\in[n]$,
then summing over $k_t\in[d]$
produces the dot product $\inprod{u_i, u_j}$. 

Now we make some observations. Due to
the random signs $\sigma_{r,i}$, a monomial
$\psi$ has expectation zero unless every bond in $\MR(G)$ has even
multiplicity, in which case the product of
random signs in $\psi$ is $1$. Also, note 
the expectation of the product of
the $\delta_{r,i}$ terms in $\psi$ is at most $(s/m)^b$ by OSNAP
properties. Thus letting $\mathcal{G}$ be the set
of all such graphs $G$ with even bond multiplicity in $\MR(G)$ that
arise from some monomial $\psi$ appearing in \Equation{bigsum}, we
have
\begin{align}
\nonumber \E \tr((S-I)^\ell) &\le \frac 1{s^{\ell}}\cdot\sum_{G\in\mathcal{G}}
\left(\frac sm\right)^b \cdot \left|\sum_{\psi:G_\psi = G} \prod_{t=1}^\ell
u^{k_t}_{i_t} u^{k_{t+1}}_{j_t}\right|\\
\nonumber {} &= \frac 1{s^{\ell}}\cdot\sum_{G\in\mathcal{G}}
\left(\frac sm\right)^b \binom{m}{z} \cdot
\left|\sum_{\substack{\psi:G_\psi = G\\R(\psi)=[z]}} \sum_{k_1,\ldots,k_\ell}\prod_{t=1}^\ell
u^{k_t}_{i_t} u^{k_{t+1}}_{j_t}\right|\\
{} &= \frac 1{s^{\ell}}\cdot\sum_{G\in\mathcal{G}}
\left(\frac sm\right)^b \binom{m}{z} \cdot
\left|\sum_{\substack{a_1,\ldots,a_y\in[n]\\\forall i\neq j\ a_i\neq a_j}}
\prod_{\substack{e\in E(\widehat{G})\\e=(i,j)}}
\inprod{u_{a_i}, u_{a_j}}\right| \EquationName{last-step}
\end{align}

Before continuing further it will be convenient to introduce a notion
we will use in our analysis called a {\it generalized dot product
  multigraph}. Such a graph $\widehat{G}$ is just as in the case of a
dot product multigraph, except that  each
edge $e = (i,j)$ is associated with
some matrix $M_e$. 
We call $M_e$ the {\it edge-matrix} of $e$.
Also since $\widehat{G}$ is undirected, we can think of
an edge $e = (i,j)$ with edge-matrix $M_e$ also as an edge $(j,i)$, in
which case we say its associated edge-matrix is $M_e^*$.
 We then
associate with $\widehat{G}$ the product
$$\prod_{\substack{e\in\widehat{G}\\e=(i,j)}} \inprod{u_{a_i}, M_e u_{a_j}} .$$
Note that a dot product multigraph is simply a generalized
dot product multigraph in which $M_e = I$ for all $e$. Also, in such a
generalized dot product multigraph, we treat multiedges as
representing the same bond iff the associated edge-matrices are also
equal (in general multiedges may have different edge-matrices).

\begin{lemma}\LemmaName{evengraphs}
Let $H$ be a connected generalized dot product multigraph on
vertex set $[N]$ with $E(H)\neq \emptyset$ and where every bond has even
multiplicity. Also
suppose that for all $e\in E(H)$, $\|M_e\| \le 1$. Define
$$ f(H) = \sum_{a_2=1}^n \cdots \sum_{a_N=1}^n
\prod_{\substack{e\in E(H)\\e = (i,j)}} \inprod{v_{a_i}, M_e
  v_{a_j}} ,$$
where $v_{a_i} = u_{a_i}$ for $i\neq 1$, and $v_{a_1}$ equals some fixed
vector $c$ with $\|c\| \le 1$. Then $f(H) \le
\|c\|^2$.
\end{lemma}
\begin{proof}
Let $\pi$ be some permutation of $\{2,\ldots,N\}$. For a bond $q =
(i,j)\in B(H)$, let $2\alpha_q$ denote the multiplicity of $q$ in $H$.
Then by ordering the assignments of the $a_t$ in the summation
$$
\sum_{a_2,\ldots,a_N\in[n]} \prod_{\substack{e\in E(H)\\e=(i,j)}}
\inprod{v_{a_i}, M_e v_{a_j}}
$$
according to $\pi$, we obtain the exactly equal expression
\begin{equation}
 \sum_{a_{\pi(N)} = 1}^n \prod_{\substack{q\in
    B(H)\\ q = (\pi(N),j)\\ N\le \pi^{-1}(j)}}
  \inprod{v_{a_{\pi(N)}}, M_q v_{a_j}}^{2\alpha_q}\cdots \sum_{a_{\pi(2)}
    = 1}^n \prod_{\substack{q\in
    B(H)\\ q = (\pi(1),j)\\ 2\le \pi^{-1}(j)}}
  \inprod{v_{a_{\pi(2)}}, M_q v_{a_j}}^{2\alpha_q}
  .\EquationName{ordersum}
\end{equation}
Here we have taken the product over $t \le \pi^{-1}(j)$ as opposed to
$t<\pi^{-1}(j)$ since there may be self-loops. By \Lemma{rownorm} and
the fact that $\|c\| \le 1$ we
have that for any $i,j$, $\inprod{v_i,v_j}^2 \le \|v_i\|^2\cdot
\|v_j\|^2 \le 1$, so we obtain an upper bound on \Equation{ordersum}
by replacing each $\inprod{v_{a_{\pi(t)}}, v_{a_j}}^{2\alpha_v}$ term
with $\inprod{v_{a_{\pi(t)}}, v_{a_j}}^2$. We can thus obtain the
sum
\begin{equation}
 \sum_{a_{\pi(N)} = 1}^n \prod_{\substack{q\in
    B(H)\\ q = (\pi(N),j)\\ q\le \pi^{-1}(j)}}
  \inprod{v_{a_{\pi(N)}}, M_q v_{a_j}}^2\cdots \sum_{a_{\pi(2)}
    = 1}^n \prod_{\substack{q\in
    B(H)\\ q = (\pi(2),j)\\ 2\le \pi^{-1}(j)}}
  \inprod{v_{a_{\pi(2)}}, M_q v_{a_j}}^2
  ,\EquationName{ordersum2}
\end{equation}
which upper bounds \Equation{ordersum}. Now note for $2\le t\le N$
that for any
nonnegative integer
$\beta_t$ and for $\{q\in B(H):
q=(\pi(t), j), t< \pi^{-1}(j)\}$ non-empty (note the strict inequality
$t<\pi^{-1}(j)$),
\begin{align}
\sum_{a_{\pi(t)}
    = 1}^n \|v_{a_{\pi(t)}}\|^{2\beta_t}\cdot \prod_{\substack{q\in
    B(H)\\ q = (\pi(t),j)\\ t\le \pi^{-1}(j)}}
  \inprod{v_{a_{\pi(t)}}, M_q v_{a_j}}^2 
&\le\sum_{a_{\pi(t)}
    = 1}^n \prod_{\substack{q\in
    B(H)\\ q = (\pi(t),j)\\ t\le \pi^{-1}(j)}}
  \inprod{v_{a_{\pi(t)}}, M_q v_{a_j}}^2 \EquationName{smallrownorm}\\
\nonumber {}&\le \prod_{\substack{q\in
    B(H)\\ q = (\pi(t),j)\\ t< \pi^{-1}(j)}}
  \left( \sum_{a_{\pi(t)}
    = 1}^n \inprod{v_{a_{\pi(t)}}, M_qv_{a_j}}^2\right)\\
\nonumber {}&= \prod_{\substack{q\in
    B(H)\\ q = (\pi(t),j)\\ t< \pi^{-1}(j)}}
  \left( \sum_{a_{\pi(t)}
    = 1}^n v_{a_j}^*M_q^*v_{a_{\pi(t)}}v_{a_{\pi(t)}}^*M_qv_{a_j}\right)\\
\nonumber {}& = \prod_{\substack{q\in
    B(H)\\ q = (\pi(t),j)\\ t< \pi^{-1}(j)}}
  (M_q v_{a_j})^*\left( \sum_{i=1}^nu_iu_i^*\right)M_qv_{a_j}\\
{}& = \prod_{\substack{q\in
    B(H)\\ q = (\pi(t),j)\\ t< \pi^{-1}(j)}}
  \|M_q v_{a_j}\|^2\EquationName{useident} \\
{}& \le \prod_{\substack{q\in
    B(H)\\ q = (\pi(t),j)\\ t< \pi^{-1}(j)}}
  \|v_{a_j}\|^2\EquationName{bounded-op} ,
\end{align}
where \Equation{smallrownorm} used \Lemma{rownorm},
\Equation{useident} used \Lemma{sumidentity}, and
\Equation{bounded-op} used that $\|M_q\| \le 1$. Now consider processing
the alternating sum-product in \Equation{ordersum2} from right to
left. We say that a bond $(i,j)\in B(H)$ is {\it assigned to
  $i$} if $\pi^{-1}(i) < \pi^{-1}(j)$.
When arriving at the $t$th sum-product and using the upper bound
\Equation{useident} on the previous $t-1$ sum-products, we will have
a sum over $\|v_{a_{\pi(t)}}\|^2$
raised to some nonnegative power (specifically the number of bonds
incident upon $\pi(t)$ but not assigned to $\pi(t)$, plus one if
$\pi(t)$ has a
self-loop) multiplied by a product of $\inprod{v_{a_{\pi(t)}}, v_{a_j}}^2$
over all bonds $(\pi(t), j)$ assigned to $\pi(t)$. There are two
cases. In the first case $\pi(t)$ has no bonds assigned to it. 
We will ignore this case since we will show that we can choose $\pi$
to avoid it.

The other case is that $\pi(t)$ has at least one bond assigned to
it. In this case we are in the scenario of \Equation{useident} and
thus summing over $a_{\pi(t)}$ yields a non-empty product of 
$\|v_{a_j}\|^2$ for the $j$ for which $(\pi(t), j)$ is a bond assigned
to $\pi(t)$. Thus in our final sum, as long as we choose $\pi$ to
avoid the first case, we are left with an upper bound of $\|c\|$
raised to some power equal to the edge-degree of vertex $1$ in $H$,
which is at least $2$. The lemma would then follow since $\|c\|^j \le
\|c\|^2$ for $j \ge 2$.

It now remains to show that we can choose $\pi$ to avoid the first
case where some $t\in \{2,\ldots,N\}$ is such that $\pi(t)$ has no
bonds assigned to it. Let $T$ be a spanning tree in $H$ rooted at
vertex $1$. We then choose any $\pi$ with the property that for any
$i<j$, $\pi(i)$ is not an ancestor of $\pi(j)$ in $T$. This can be
achieved, for example, by assigning $\pi$ values in reverse breadth
first search order.
\end{proof}

\begin{lemma}\LemmaName{amgm}
Let $\widehat{G}$ be any dot product graph as in \Equation{last-step}.
Then
$$
\left|\sum_{\substack{a_1,\ldots,a_y\in[n]\\\forall i\neq j\ a_i\neq a_j}}
\prod_{\substack{e\in \widehat{G}\\e=(i,j)}}
\inprod{u_{a_i}, u_{a_j}}\right| \le y! \cdot d^{y - w + 1} .$$
\end{lemma}
\begin{proof}
We first note that
we have the inequality
\begin{align*}
 \left|\sum_{\substack{a_1,\ldots,a_y\in[n]\\\forall i\neq j\ a_i\neq a_j}}
\prod_{\substack{e\in E(\widehat{G})\\e=(i,j)}}
\inprod{u_{a_i}, u_{a_j}}\right| &= \left|
\sum_{\substack{a_1,\ldots,a_{y-1}\in[n]\\\forall i\neq j\in[y-1]\
    a_i\neq a_j}} \left(\sum_{a_y=1}^n
\prod_{\substack{e\in E(\widehat{G})\\e=(i,j)}}
\inprod{u_{a_i}, u_{a_j}} - \sum_{t=1}^{y-1}\sum_{a_y = a_t}
\prod_{\substack{e\in E(\widehat{G})\\e=(i,j)}}
\inprod{u_{a_i}, u_{a_j}}\right)\right|\\
{} &\hspace{-.85in}\le \left|
\sum_{\substack{a_1,\ldots,a_{y-1}\in[n]\\\forall i\neq j\in[y-1]\
    a_i\neq a_j}} \sum_{a_y=1}^n
\prod_{\substack{e\in E(\widehat{G})\\e=(i,j)}}
\inprod{u_{a_i}, u_{a_j}}\right| +
\sum_{t=1}^{y-1}\left|\sum_{\substack{a_1,\ldots,a_{y-1}\in[n]\\\forall
      i\neq j\in[y-1]\ a_i\neq a_j}}\sum_{a_y =  a_t}
\prod_{\substack{e\in E(\widehat{G})\\e=(i,j)}}
\inprod{u_{a_i}, u_{a_j}}\right|
\end{align*}
We can view the sum over $t$ on the right hand side of the above as
creating $t-1$ new dot product multigraphs, each with one fewer vertex
where we eliminated vertex $y$ and associated it with vertex $t$ for
some $t$, and for each edge $(y,a)$ we effectively replaced it with
$(t,a)$. Also in first sum where we sum over all $n$ values of $a_y$,
we have eliminated the constraints $a_y \neq a_i$ for $i\neq y$. By
recursively applying this inequality to each of the resulting $t$
summations, we bound 
$$ \left|\sum_{\substack{a_1,\ldots,a_y\in[n]\\\forall i\neq j\
      a_i\neq a_j}}
\prod_{\substack{e\in E(\widehat{G})\\e=(i,j)}}
\inprod{u_{a_i}, u_{a_j}}\right|$$
by a sum of contributions from $y!$ dot product multigraphs where in
none of these multigraphs do we have the constraint that $a_i\neq a_j$
for $i\neq j$. We will show that each one of these resulting
multigraphs contributes at most $d^{y-w+1}$, from which the lemma
follows.

Let $G'$ be one of the dot product multigraphs at a leaf of the above
recursion so that we now wish to bound
\begin{equation}
F(G') \eqdef \left|\sum_{a_1,\ldots,a_y=1}^n
\prod_{\substack{e\in E(\widehat{G'})\\e=(i,j)}}
\inprod{u_{a_i}, M_e u_{a_j}}\right| \EquationName{leafgraph}
\end{equation}
where $M_e = I$ for all $e$ for $G'$.
Before proceeding, we first claim that every connected component of
$G'$ is Eulerian. To see this, observe $G$
has an Eulerian tour, by following the edges of $G$ in
increasing order of label, and thus all middle vertices have even
edge-degree in $G$. However they also have even edge-degree in
$\MR(G)$, and thus the
edge-degree of a middle vertex in $\LM(G)$ must be even as well. Thus, 
every vertex in $\widehat{G}$ has even edge-degree, and thus
every vertex in each of the recursively created leaf graphs also has even
edge-degree since at every step when we eliminate a vertex, some other
vertex's degree increases by the eliminated vertex's degree which was
even. Thus every connected component of $G'$ is Eulerian as desired.

We now upper bound $F(G')$. Let the
connected components of $G'$ be $C_1,\ldots,C_{CC(G')}$, where
$CC(\cdot)$ counts connected
 components. An observation we repeatedly use later is that for any
 generalized dot product multigraph $H$ with components
 $C_1,\ldots,C_{CC(H)}$,
\begin{equation}
F(H) = \prod_{i=1}^{CC(H)} F(C_i) .
\EquationName{component-factors}
\end{equation}
We treat $G'$ as a generalized dot
product multigraph so that each edge $e$ has an associated matrix
$M_e$ (though in fact $M_e = I$ for all $e$). 
Define an undirected multigraph to be {\it good} if all its connected
components have two edge-disjoint spanning trees.
We will show that $F(G') \le
F(G'')$ for some generalized dot product multigraph $G''$ that is
good then will show
$F(G'') \le d^{y-w+1}$. If $G'$ itself is good then we
can set $G'' = G'$. Otherwise, 
we
will show $F(G') = F(H_0) = \ldots = F(H_\tau)$ for smaller and
smaller generalized dot
product multigraphs $H_t$ (i.e.\ with successively
fewer vertices) whilst maintaining the invariant that 
each $H_t$ has Eulerian connected components and has $\|M_e\| \le
1$ for all $e$. We stop when some $H_\tau$
is good and we can set $G'' = H_\tau$.

Let us now focus on constructing this sequence of $H_t$ in the case
that $G'$ is not good. Let $H_0 = G'$. Suppose we have
constructed $H_0,\ldots,H_{t-1}$ for $i\ge 1$ none of which are
good, and now we want to
construct $H_t$.  Since $H_{t-1}$ is not good it cannot be
$4$-edge-connected by \Corollary{twotrees}, so there is
some connected component $C_{j^*}$ of $H_{t-1}$ with some cut $S\subsetneq
V(C_{j^*})$ with $2$ edges crossing the cut $(S, V(C_{j^*})\backslash
S)$ (note that since $C_{j^*}$ is Eulerian, any cut has an even number
of edges crossing it).
Choose such an $S\subsetneq V(C_{j^*})$ with $|S|$ minimum amongst
all such cuts. Let the two edges crossing the cut be $(g,h),
(g',h')$ with $h,h'\in S$ (note that it may be the case that $g=g'$
and/or $h=h'$). Note that $F(C_{j^*})$ equals the magnitude of
{\footnotesize \begin{equation}
\sum_{a_{V(C_{j^*})\backslash S}\in [n]^{|V(C_{j^*})\backslash
    S|}}\left(\prod_{\substack{e\in
    E(V(C_{j^*})\backslash S)\\e=(i,j)}}
 \inprod{u_{a_i}, M_e u_{a_j}}\right)
u_{a_g}^*\underbrace{M_{(g,h)}\left(\sum_{a_S\in [n]^{|S|}}u_{a_h}
\left(\prod_{\substack{e\in
    E(C_{j^*}(S))\\e=(i,j)}}
\inprod{u_{a_i}, M_e u_{a_j}}\right) u_{a_{h'}}^*\right)M_{(h',g')}}_{M}
u_{a_{g'}} .
\EquationName{hcut}
\end{equation}}
We define $H_t$ to be $H_{t-1}$ but where in the $j^*$th component we
replace $C_{j^*}$ with $C_j^*(V(C_j^*)\backslash S)$ and add an
additional edge
from $g$ to $g'$ which we assign edge-matrix $M$. We thus have
that $F(H_{t-1}) = F(H_t)$ by
\Equation{component-factors}. Furthermore each component of $H_t$ is
still Eulerian since every vertex in $H_{t-1}$ has either been
eliminated, or its edge-degree has been preserved and thus all
edge-degrees are even. It remains to show that $\|M\| \le 1$.

We first claim that
$C_{j^*}(S)$ has two edge-disjoint spanning trees. 
Define $C'$ to be the graph $C_{j^*}(S)$ with an edge from $h$ to $h'$
added. We show that $C'(S)$ is $4$-edge-connected so that $C_{j^*}(S)$
has two edge-disjoint spanning trees by \Corollary{twotrees}.
Now to see this, consider some $S'\subsetneq S$.
Consider the cut $(S', V(C')\backslash S')$.
$C'$ is Eulerian, so the  number of edges crossing this cut is either
$2$ or at least $4$. If it $2$, then since $|S'| < |S|$ this is a
contradiction since $S$ was chosen amongst such cuts to have $|S|$
minimum. Thus it is at least $4$, and we claim that the number of edges
crossing the cut $(S', S\backslash S')$ in $C'(S)$ must also be at
least $4$. If not, then it is $2$ since $C'(S)$ is Eulerian. However
since the number of edges leaving $S'$ in $C'$ is at least $4$, it
must then be
that $h,h'\in S'$. But then the cut $(S\backslash S', V(C')\backslash
(S\backslash S'))$ has $2$ edges crossing it so that $S\backslash S'$
is a smaller cut than $S$ with $2$ edges leaving it in $C'$, violating
the
minimality of $|S|$, a contradiction. Thus $C'(S)$ is
$4$-edge-connected, implying $C_{j^*}(S)$ has two edge-disjoint
spanning trees $T_1,T_2$ as desired.

Now to show $\|M\| \le 1$,  by \Fact{opnorm} we have $\|M\| =
\sup_{\|x\|,\|x'\| = 1} x^* M x'$. We have that
\begin{align}
\nonumber x^* M x' &= \sum_{a_S\in [n]^{|S|}}\inprod{x,M_{(g,h)}u_{a_h}}
\cdot \left(\prod_{\substack{e\in
    E(C_{j^*}(S))\\e=(i,j)}}
\inprod{u_{a_i}, M_e u_{a_j}}\right)\cdot  \inprod{u_{a_{h'}},M_{(h',g')}
x'}\\
\nonumber {}&= \sum_{a_S\in [n]^{|S|}} \left( \inprod{x,M_{(g,h)}u_{a_h}}\cdot
  \prod_{\substack{e\in T_1\\e=(i,j)}} \inprod{u_{a_i}, M_e
    u_{a_j}}\right) \cdot \left(\inprod{u_{a_{h'}},M_{(h',g')} x'}
\cdot \prod_{\substack{e\in E(C_{j^*}(S))\backslash T_1\\e=(i,j)}}
\inprod{u_{a_i}, M_e u_{a_j}} \right)\\
\nonumber {}&\le \frac 12 \cdot \left [\sum_{a_S\in [n]^{|S|}} \left(
    \inprod{x,M_{(g,h)}u_{a_h}}^2\cdot
  \prod_{\substack{e\in T_1\\e=(i,j)}} \inprod{u_{a_i}, M_e
    u_{a_j}}^2\right)\right.\\
{}& \hspace{1in}\left. + \sum_{a_S\in [n]^{|S|}}
\left(\inprod{u_{a_{h'}},M_{(h',g')} x'}^2
\cdot \prod_{\substack{e\in E(C_{j^*}(S))\backslash T_1\\e=(i,j)}}
\inprod{u_{a_i}, M_e u_{a_j}}^2 \right) \right ] \EquationName{amgm}\\
{}& \le \frac 12 \left(\|x\|^2 + \|x'\|^2\right)
\EquationName{useeven} \\
\nonumber {}& = 1 ,
\end{align}
where \Equation{amgm} used the AM-GM inequality, and
\Equation{useeven} used
\Lemma{evengraphs} (note the graph with vertex set $S\cup \{g'\}$ and
edge set
$E(C_{j^*}(S))\backslash T_1 \cup \{(g',h')\}$ is connected since $T_2
\subseteq E(C_{j^*}(S))\backslash T_1$). Thus we have shown that $H_t$
satisfies the desired properties. Now notice that the sequence
$H_0,\ldots,H_1,\ldots$\ must eventually terminate since the number of
vertices is strictly decreasing in this sequence and any Eulerian
graph on $2$ vertices is good. Therefore we have that $H_{\tau}$ is
eventually good for some $\tau > 0$ and we can set $G'' = H_{\tau}$.

It remains to show that for our final good $G''$ we have $F(G'') \le
d^{y-w+1}$. We will show this in two parts by showing that both
$CC(G'') \le d^{y-w+1}$ and $F(G'') \le d^{CC(G'')}$. For the first
claim, note that $CC(G'') \le CC(\widehat{G})$ since every $H_t$ has
the same number of connected components as $G'$, and $CC(G') \le
CC(\widehat{G})$. This latter inequality holds since in each level of
recursion used to eventually obtain $G'$ from $\widehat{G}$, we
repeatedly identified two vertices as equal and merged them, which can
only decrease the number of connected components. 
Now, all
middle vertices in $G$ lie in one connected component (since $G$ is
connected)
 and $\MR(G)$ has $w$ connected components. Thus the at least $w-1$
edges connecting these components in $G$ must come from $\LM(G)$,
implying that $\LM(G)$ (and thus $\widehat{G}$) has at most $y-w+1$
 connected components, which thus must also be true for $G''$ as
 argued above.

It only remains to show $F(G'') \le d^{CC(G'')}$. Let $G''$ have
connected components $C_1,\ldots,C_{CC(G'')}$ with each $C_j$ having
$2$ edge-disjoint spanning trees $T^j_1,T^j_2$. We then have
\begin{align}
\nonumber F(G'') &= \prod_{t=1}^{CC(G'')} F(C_t)\\
\nonumber {} &= \prod_{t=1}^{CC(G'')} \left|\sum_{a_1,\ldots,a_{|V(C_t)|}=1}^n
\prod_{\substack{e\in E(C_t)\\e=(i,j)}}
\inprod{u_{a_i}, M_e u_{a_j}}\right|\\
\nonumber {} &= \prod_{t=1}^{CC(G'')} \left|\sum_{a_1,\ldots,a_{|V(C_t)|}=1}^n
\left(\prod_{\substack{e\in T^t_1\\e=(i,j)}}
\inprod{u_{a_i}, M_e u_{a_j}}\right)\cdot \left(\prod_{\substack{e\in
    E(C_t)\backslash T^t_1\\e=(i,j)}}
\inprod{u_{a_i}, M_e u_{a_j}}\right)\right|\\
 {} &\le \prod_{t=1}^{CC(G'')}\frac 12\left[
  \sum_{a_1=1}^n\sum_{a_2,\ldots,a_{|V(C_t)|}=1}^n
\prod_{\substack{e\in T^t_1\\e=(i,j)}}
\inprod{u_{a_i}, M_e u_{a_j}}^2 +
\sum_{a_1=1}^n\sum_{a_2,\ldots,a_{|V(C_t)|}=1}^n\prod_{\substack{e\in
    E(C_t)\backslash T^t_1\\e=(i,j)}}
\inprod{u_{a_i}, M_e u_{a_j}}^2\right]\EquationName{amgm-again}\\
{}&\le \prod_{t=1}^{CC(G'')} \sum_{a_1=1}^n \|u_{a_1}\|^2
\EquationName{useeven-again}\\
\nonumber {}&= \prod_{t=}^{CC(G'')} \|U\|_F^2\\
\nonumber {}&= d^{CC(G'')}
\end{align}
where \Equation{amgm-again} used the AM-GM inequality, and
\Equation{useeven-again} used \Lemma{evengraphs}, which applies since
$V(C_t)$ with edge set $T^t_1$ is connected, and $V(C_t)$ with edge
set $E(C_t)\backslash T^t_1$ is connected (since $T_2^t\subseteq
E(C_t)\backslash T^t_1$).
\end{proof}

Now, for any $G\in\mathcal{G}$ we have $y+z \le b+w$ since
for any graph the number of edges plus the number of connected
components is at
least the number of vertices. We also have $b \ge 2z$ since every
right vertex of $G$ is incident upon at least two distinct bonds
(since $i_t \neq j_t$ for all $t$). 
We also have $y\le b\le \ell$ since
$\MR(G)$ has exactly $2\ell$ edges with no isolated vertices,
and every bond has even multiplicity.
 Finally, a crude bound on the
number of different  $G\in\mathcal{G}$ with a given $b,y,z$ is
$(zy^2)^\ell \le (b^3)^{\ell}$. This is because when drawing the graph
edges in increasing order of edge label, when at a left vertex, we draw
edges from the left to the middle, then to the right, then to the
middle, and then
back to the left again, giving $y^2z$ choices. This is done $\ell$ times.
Thus by \Lemma{amgm} and \Equation{last-step}, and using that $t! \le
e\sqrt{t}(t/e)^t$ for all $t\ge 1$,
\allowdisplaybreaks
\begin{align}
\nonumber \E\tr((S-I)^\ell)&\le d\cdot
\frac{1}{s^\ell}\sum_{b,y,z,w}\sum_{\substack{G\in
    \mathcal{G}\\b(G)=b,y(G)=y\\w(G)=w,z(G)=z}}y!\cdot s^b\cdot
m^{z-b}\cdot d^{y-w}\\
\nonumber &\le ed\sqrt{\ell}\cdot
\frac{1}{s^\ell}\sum_{b,y,z,w}(b/e)^bs^b\sum_{\substack{G\in
    \mathcal{G}\\b(G)=b,y(G)=y\\w(G)=w,z(G)=z}}
\left(\frac dm\right)^{b-z}\\
\nonumber &\le ed\sqrt{\ell}\cdot
\frac{1}{s^\ell}\sum_{b,y,z,w}b^{3\ell}(b/e)^bs^b\cdot
\left(\frac dm\right)^{b-z}\\
\nonumber &\le ed\sqrt{\ell}\cdot
\frac{1}{s^\ell}\sum_{b,y,z,w}b^{3\ell}
\left((sb/e)\sqrt{\frac dm}\right)^b\\
&\le ed\ell^4\sqrt{\ell}\cdot \max_{2\le b\le \ell}
\left(\frac{b^3}{s}\right)^{\ell-b}
\left((b^4/e) \sqrt{\frac dm}\right)^b \EquationName{alternate-params}
\end{align}

Define $\epsilon = 2\eps - \eps^2$.
For $\ell \ge \ln(ed\ell^{9/2}/\delta) = O(\ln(d/\delta))$, $s \ge
e\ell^3/\epsilon = O(\log(d/\delta)^3/\eps)$, and $m \ge
d\ell^8/\epsilon^2 = O(d\log(d/\delta)^8/\eps^2)$, the above
expression is at
most $\delta \epsilon^\ell$. Thus as in \Equation{main-requirement},
by \Equation{trace-strategy2} we
have
$$ \Pr\left(\|S - I\| > \epsilon\right) < \frac 1{\epsilon^\ell}\cdot
\E \tr((S-I)^\ell) \le \delta .$$
\end{proof}

The proof of \Theorem{main-moments} reveals that for $\delta =
1/\mathrm{poly}(d)$ one could also set $m
= O(d^{1+\gamma}/\eps^2)$ and $s = O_{\gamma}(1/\eps)$ for any fixed
constant $\gamma>0$ and arrive at
the same conclusion. Indeed, let $\gamma' < \gamma$ be any positive
constant. Let $\ell$ in the proof
of \Theorem{main-moments} be taken as $O(\log (d/\delta)) = O(\log d)$.
It suffices to ensure  $\max_{2\le b\le
  \ell}(b^3/s)^{\ell-b}\cdot ((b^4/e) \sqrt{d/m})^b \le \eps^\ell
\delta / (e d\ell^{9/2})$ by \Equation{alternate-params}.
Note $d^{\gamma'} > b^{3\ell}$ as long as $b/\ln b >
3 \gamma^{-1}\ell/\ln d = O(1/\gamma')$, so $d^{\gamma'} >
b^{3\ell}$ for $b >
b^*$ for some $b^* = \Theta(\gamma^{-1}/\log(1/\gamma))$.
We choose $s \ge e (b^*)^3/\eps$ and $m =
d^{1+\gamma}/\eps^2$, which is at least
$d^{1+\gamma'}\ell^8/\eps^2$for $d$ larger than some
fixed constant. Thus the max above is always as small as desired,
which can be seen by looking at $b \le b^*$ and $b>b^*$ separately (in
the former case $b^3/s < 1/e$, and in the latter case 
$(b^3/s)^{\ell-b}\cdot ((b^4/e) \sqrt{d/m})^b < (\eps/e)^{\ell}
b^{3\ell} d^{-\gamma' b} = (\eps/e)^{\ell} e^{3\ell\ln b - \gamma'
  b\ln d} < (\eps/e)^{\ell}$ is as
small as desired). This observation yields:

\begin{theorem}\TheoremName{gamma}
Let $\alpha,\gamma>0$ be arbitrary constants. For $\SE$ an OSNAP with
$s=\Theta(1/\eps)$ and $\eps\in(0,1)$,
with probability at least $1-1/d^{\alpha}$, all singular
values of $\SE U$ are $1\pm\eps$
for $m = \Omega(d^{1+\gamma}/\eps^2)$
and $\sigma,h$ being $\Omega(\log d)$-wise independent. The constants
in
the big-$\Theta$ and big-$\Omega$ depend on $\alpha,\gamma$.
\end{theorem}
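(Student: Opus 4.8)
The plan is not to redo the moment computation, but to reuse the proof of \Theorem{main-moments} verbatim up through \Equation{alternate-params} and then re-optimize the three free parameters $\ell, s, m$ for the new regime, where $\delta = d^{-\alpha}$ is only polynomially small but $m$ is allowed to be as large as $d^{1+\gamma}/\eps^2$. Writing $\epsilon = 2\eps - \eps^2$ as in that proof, \Equation{alternate-params} together with \Equation{trace-strategy2} gives
$$\Pr\left(\|S-I\| > \epsilon\right) \;\le\; \epsilon^{-\ell}\cdot ed\ell^{9/2}\cdot \max_{2\le b\le \ell} T_b, \qquad T_b \eqdef \left(\frac{b^3}{s}\right)^{\ell-b}\left(\frac{b^4}{e}\sqrt{\frac dm}\right)^b ,$$
so it suffices to force $\max_b T_b$ below $\epsilon^\ell\delta/(ed\ell^{9/2})$. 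I would take $\ell = \Theta(\log d) = \Theta(\log(d/\delta))$, the same order as in \Theorem{main-moments}; in particular the $\Omega(\log d)$-wise independence of $\sigma,h$ used to evaluate the expectation in \Equation{bigsum} is inherited for free, and for $\ell$ a large enough multiple of $\log d$ this threshold simultaneously beats the prefactor $ed\ell^{9/2}$ and the failure probability $\delta = d^{-\alpha}$.

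The one genuinely new point is the bookkeeping of $T_b$ when $m$ is large and $s$ is small. With $m = d^{1+\gamma}/\eps^2$ we have $\sqrt{d/m} = \eps\,d^{-\gamma/2}$, and choosing $s = e(b^*)^3/\eps = \Theta(1/\eps)$ for a constant $b^*$ to be fixed momentarily gives $b^3/s = (b/b^*)^3\cdot\eps/e$; substituting into the definition of $T_b$ and collecting powers yields the clean identity
$$T_b \;=\; \left(\frac\eps e\right)^{\ell}\cdot\left(\frac{b}{b^*}\right)^{3(\ell-b)}\cdot b^{4b}\cdot d^{-\gamma b/2} .$$
Here the polynomial-in-$d$ factor $d^{-\gamma b/2}$, which exists only because $m$ was taken polynomially larger than $d/\eps^2$, plays the role that $s = \polylog(d)/\eps$ played in \Theorem{main-moments}; but it dominates the $b$-dependent blow-up only once $b$ is a large enough constant, so I would split the maximum at a threshold $b^*$ depending on $\gamma$. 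For $b\le b^*$ the factor $(b/b^*)^{3(\ell-b)}$ is at most $1$ and $b^{4b}d^{-\gamma b/2}\le (b^*)^{4b^*}d^{-\gamma} < 1$ once $d$ exceeds a constant, so $T_b\le (\eps/e)^{\ell}$; for $b>b^*$ I would crudely bound $(b/b^*)^{3(\ell-b)}b^{4b}\le b^{7\ell}$ (using $b\le\ell$) and choose $b^*$ large enough that $b^{7\ell} < d^{\gamma b/2}$ for all $b>b^*$ --- possible because $b/\ln b\to\infty$ while $\ell/\ln d = \Theta(1)$ --- again giving $T_b\le(\eps/e)^{\ell}$.

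Hence $\max_b T_b \le (\eps/e)^{\ell} \le \epsilon^{\ell} e^{-\ell}$ (using $\eps\le\epsilon$), and taking $\ell$ a large enough constant multiple of $\log d$ (depending on $\alpha$) makes $e^{-\ell} \le \delta/(ed\ell^{9/2})$, so $\max_b T_b\le \epsilon^{\ell}\delta/(ed\ell^{9/2})$ as required. Then \Equation{trace-strategy2} gives $\Pr(\|S-I\| > \epsilon) < \delta = d^{-\alpha}$, and as in \Equation{main-requirement} this places all singular values of $\SE U$ in $[1-\eps,1+\eps]$, with all implied constants depending only on $\alpha$ and $\gamma$. There is no serious obstacle beyond this re-optimization: the only subtlety is the split at the $\gamma$-dependent constant $b^*$, which is forced because the new polynomial saving $d^{-\gamma b/2}$ is worthless for $b$ below a $\gamma$-dependent constant --- there the shrinking factor $(b^3/s)^{\ell-b}$, i.e.\ the $\eps$ hidden inside $s = \Theta(1/\eps)$, must instead carry the decay.
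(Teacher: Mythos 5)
Your proposal is correct and is essentially the same argument the paper gives: re-use \Equation{alternate-params} with $\ell = \Theta(\log d)$, substitute $m = d^{1+\gamma}/\eps^2$ and $s = e(b^*)^3/\eps$, and split the maximum over $b$ at a $\gamma$-dependent (and, via $\ell/\log d$, $\alpha$-dependent) threshold $b^*$ so that the $\eps$ inside $s$ carries the decay for small $b$ while $d^{-\gamma b/2}$ dominates for large $b$. Your algebra is a bit cleaner than the paper's sketch (you avoid its auxiliary $\gamma' < \gamma$ by working with $\sqrt{d/m} = \eps d^{-\gamma/2}$ directly, and your crude $b^{7\ell}$ bound, though looser than the available $b^{3\ell+b}\le b^{4\ell}$, still suffices), but there is no substantive difference in route.
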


\begin{remark}
\textup{
\Section{intro} stated the time to list all non-zeroes
in a column in \Theorem{main-moments} is $t_c = \tilde{O}(s)$. For $\delta =
1/\mathrm{poly}(d)$, naively one would actually achieve $t_c =
O(s\cdot\log d)$ since one needs to evaluate an $O(\log d)$-wise
independent hash function $s$ times. This can be improved to
$\tilde{O}(s)$ using fast multipoint evaluation of hash functions; see
for example the last paragraph of Remark 16 of \cite{KNPW11}.
}
\end{remark}
\section{Applications}\SectionName{apps}
We use the fact that many matrix
problems  have the same time complexity as matrix multiplication
including computing the matrix inverse~\cite{BH74}\cite[Appendix
A]{HarveyThesis}, and QR
decomposition~\cite{Schonhage73}. In this paper we only consider the real RAM model and state the running time in terms of the number of field operations. The algorithms for solving linear systems, computing inverse, QR decomposition, and approximating SVD based on fast matrix multiplication can be implemented with precision comparable to that of conventional algorithms to achieve the same error bound (with a suitable notion of approximation/stability). We refer readers to~\cite{DDH07} for details. Notice that it is possible that both algorithms based on fast matrix multiplication and conventional counterparts are unstable, see e.g.~\cite{AV97} for an example of a pathological matrix with very high condition number.

In this section we describe some applications of our subspace
embeddings to problems in numerical linear algebra. All applications
follow from a straightforward replacement of previously used embeddings
with our new ones as most proofs go through verbatim. In the statement
of our bounds we implicitly assume $\nnz(A) \ge n$, since otherwise
fully zero rows of $A$ can be ignored without
affecting the problem solution.

\subsection{Approximate Leverage Scores}
This section describes the application of our subspace embedding from
\Theorem{main-moments} or \Theorem{gamma} to approximating the leverage
scores. Consider a matrix $A$ of size $n\times d$ and rank $r$. Let
$U$ be a $n\times r$ matrix whose columns form an orthonormal basis of
the column space of $A$. The {\em leverage scores} of $A$ are the
squared lengths of the rows of $U$. The algorithm for approximating
the leverage scores and the analysis are the same as those
of~\cite{CW12}, which itself uses essentially the same algorithm
outline as Algorithm 1 of \cite{DMMW12}. The improved bound is stated
below (cf.~\cite[Theorem 21]{CW12}).

\begin{theorem}
For any constant $\eps > 0$, there is an algorithm that with
probability at least $2/3$, approximates all leverage scores of a
$n\times d$ matrix $A$ in time $\tilde{O}(\nnz(A)/\eps^2 +
r^{\omega}\eps^{-2\omega})$.
\end{theorem}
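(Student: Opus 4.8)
The plan is to run the leverage--score algorithm of \cite{CW12} (itself the template of Algorithm~1 of \cite{DMMW12}) essentially verbatim, changing only two things: every subspace embedding is instantiated by an OSNAP via \Theorem{main-moments} (or \Theorem{gamma}, either suffices since here $\eps$ is a constant), and every dense factorization on a $\poly(r/\eps)$-sized matrix is performed in matrix-multiplication time using the QR/inverse/approximate-SVD results recalled at the start of this section \cite{Schonhage73,BH74,DDH07}. Phase~1, \emph{reduce to small rank and few columns}: let $\Pi_3\in\R^{d\times m_3}$ be such that $\Pi_3^*$ is an OSNAP with constant distortion and constant failure probability for $r$-dimensional subspaces, with $m_3=\tilde{O}(r)$ (and $O(1)$ or $\polylog(r)$ non-zeros per column); since $\Pi_3^*$ is then injective on the $\le r$-dimensional row space of $A$, the matrix $A':=A\Pi_3\in\R^{n\times m_3}$ has exactly the same column space as $A$ (hence the same leverage scores), has rank $r$, and has $\nnz(A')=\tilde{O}(\nnz(A))$; forming it costs $\tilde{O}(\nnz(A))$ time, so it suffices to approximate the leverage scores of $A'$.

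Phase~2, \emph{whiten}: let $\Pi_1\in\R^{m_1\times n}$ be an OSNAP with distortion $\Theta(\eps)$ and constant failure probability for $r$-dimensional subspaces, taking $m_1=\tilde{O}(r/\eps^2)$ with $s_1=\polylog(r)/\eps$ non-zeros per column by \Theorem{main-moments} (or $m_1=O(r^{1+\gamma}/\eps^2)$, $s_1=O(1/\eps)$ by \Theorem{gamma}); compute $B:=\Pi_1 A'\in\R^{m_1\times m_3}$ in $O(s_1\cdot\nnz(A'))=\tilde{O}(\nnz(A)/\eps)$ time. Since $B$ has both dimensions $\tilde{O}(r/\eps^2)$ and rank $r$, an approximate rank-revealing QR (or approximate compact SVD) of $B$ costs $\tilde{O}((r/\eps^2)^\omega)=\tilde{O}(r^{\omega}\eps^{-2\omega})$ and yields a full-rank $R\in\R^{m_3\times r}$ with $BR$ having orthonormal columns. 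Because $\Pi_1$ is a subspace embedding for the column space of $A'$, $\|A'Ry\|=(1\pm\Theta(\eps))\|BRy\|=(1\pm\Theta(\eps))\|y\|$ for all $y$, so $A'R$ spans the column space of $A$ and all its non-zero singular values lie in $[1-\Theta(\eps),1+\Theta(\eps)]$; writing the compact SVD $A'R=W\Sigma W'^{*}$ with $W$ an orthonormal basis of the column space of $A$, one gets $\|e_i^{*}A'R\|^2=\|(e_i^{*}W)\Sigma\|^2=(1\pm\Theta(\eps))\|e_i^{*}W\|^2=(1\pm\Theta(\eps))\ell_i$ for every $i$. Phase~3, \emph{estimate the $n$ row norms}: take a Johnson--Lindenstrauss matrix $\Pi_2\in\R^{r\times r_2}$ with $r_2=\Theta(\eps^{-2}\log n)$, which with probability $\ge 9/10$ preserves the norms of all $n$ vectors $e_i^{*}A'R$ up to $1\pm\eps$; form $R\Pi_2\in\R^{m_3\times r_2}$ (cheap) and then $A'(R\Pi_2)\in\R^{n\times r_2}$ in $O(\nnz(A')\cdot r_2)=\tilde{O}(\nnz(A)/\eps^2)$ time, and output $\tilde\ell_i:=\|e_i^{*}A'(R\Pi_2)\|^2$. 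A union bound over the $O(1)$ random objects gives, with probability $\ge 2/3$, $\tilde\ell_i=(1\pm\Theta(\eps))\ell_i$ for all $i$; rescaling $\eps$ by a constant gives the $1\pm\eps$ guarantee, and summing the three phase costs gives $\tilde{O}(\nnz(A)/\eps^2+r^{\omega}\eps^{-2\omega})$.

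Since the correctness argument is exactly that of \cite{CW12,DMMW12} with the phrase ``subspace embedding'' replaced by the conclusion of \Theorem{main-moments}/\Theorem{gamma}, the only real content is the running-time bookkeeping above and the claim that the $\poly(r/\eps)$-sized linear algebra in Phase~2 runs in matrix-multiplication time. The latter is the one point I expect to need care: an exact SVD is not a finite arithmetic operation, so one must invoke an \emph{approximate} rank-revealing QR/SVD running in $O(N^{\omega})$ arithmetic operations (as cited in this section) and check that its approximation error is swallowed by the $\Theta(\eps)$ slack already present --- which it is, because $R$ is used only to make $A'R$ have near-orthonormal columns, a property stable under small perturbations of $R$. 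Everything else is a direct substitution of our embeddings into the previously analyzed algorithm.
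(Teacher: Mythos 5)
Your proof is correct and follows the same high-level template as the paper's: plug OSNAPs (from Theorem~\ref{thm:main-moments} or Theorem~\ref{thm:gamma}) into the leverage-score algorithm of \cite{CW12}/\cite{DMMW12} and use fast matrix multiplication for the $\poly(r/\eps)$-sized dense linear algebra. Phases~2 and~3 of your argument (whiten via a left OSNAP, then JL-sketch the $n$ row norms of $A'R$) match the paper step-for-step.

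Where you genuinely diverge is Phase~1. The paper does not reduce the number of columns by right-multiplying with a sketch; instead it invokes \cite{CKL12} to first replace $A$ by a submatrix consisting of exactly $r = \rank(A)$ linearly independent columns, at cost $O((\nnz(A)+r^{\omega})\log n)$. That preprocessing hands Phase~2 a matrix $\SE A$ of full column rank $r$, so a plain QR factorization yields an invertible $R\in\R^{r\times r}$ and $R^{-1}$ is used directly. Your alternative --- right-multiplying by an OSNAP $\Pi_3$ so that $A'=A\Pi_3$ has $\tilde{O}(r)$ columns of rank $r$ --- is also valid (the argument that $\Pi_3^*$ is injective on the row space, hence $A'$ has the same column space and leverage scores as $A$, is correct), and it buys you a more self-contained reduction that avoids importing \cite{CKL12}. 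The price is that $B=\Pi_1 A'$ is rank-deficient (rank $r$ with $m_3>r$ columns), so Phase~2 must use a rank-revealing QR or compact SVD to produce the $m_3\times r$ matrix $R$, rather than a standard QR plus inversion; you correctly flag and handle this. Both routes give the claimed $\tilde{O}(\nnz(A)/\eps^2 + r^{\omega}\eps^{-2\omega})$ bound (your $\tilde{O}((r/\eps^2)^{\omega})$ for the factorization of $B$ is a mild overcount --- via the $B^{\top}B$ trick it is actually $\tilde{O}(r^{\omega}/\eps^2)$ since only one dimension carries the $1/\eps^2$ --- but the paper's own accounting makes the identical simplification, so this is moot).
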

\begin{proof}
As in \cite{CW12}, this follows by replacing the Fast
Johnson-Lindenstrauss embedding used in \cite{DMMW12} with our sparse
subspace embeddings. The only difference is in the parameters of our
OSNAPs. We essentially repeat the argument verbatim just to illustrate
where our new OSE parameters fit in; nothing in this proof is
new. Now, we first use
\cite{CKL12} so that we can assume $A$ has only $r=\rank(A)$ columns
and is of full column rank. Then, we take an OSNAP $\SE$ with $m =
\tilde{O}(r/\eps^2), s = (\polylog r)/\eps$ and compute $\SE A$. We
then find $R^{-1}$ so that $\SE A R^{-1}$ has orthonormal columns. The
analysis of \cite{DMMW12} shows that the $\ell_2^2$ of the rows of
$AR^{-1}$ are $1\pm\eps$ times the leverage scores of $A$. Take
$\SE'\in\R^{r\times t}$
to be a JL matrix that preserves the $\ell_2$ norms of the $n$ rows of
$AR^{-1}$ up to $1\pm\eps$. Finally, compute $R^{-1}\SE'$ then
$A(R^{-1}\SE')$ and output the squared row norms of $AR\SE'$.

Now we bound the running time. The time to reduce $A$ to having $r$
linearly independent columns
is $O((\nnz(A) + r^{\omega})\log n)$. $\SE A$ can be computed in
time $O(\nnz(A)\cdot (\polylog r)/\eps)$. Computing $R\in\R^{r\times
  r}$ from the $QR$ decomposition takes time $\tilde{O}(m^{\omega}) =
\tilde{O}(r^{\omega}/\eps^{2\omega})$, and then $R$ can be inverted in
time $\tilde{O}(r^{\omega})$; note $\SE AR^{-1}$ has orthonormal
columns. Computing $R^{-1}\SE'$ column by column takes time $O(r^2\log
r)$ using the FJLT of \cite{AL11,KW11} with $t =
O(\eps^{-2}\log n(\log\log n)^4)$. We then multiply the matrix $A$ by
the $r\times t$ matrix $R^{-1}\SE'$, which takes time $O(t\cdot
\nnz(A)) = \tilde{O}(\nnz(A)/\eps^2)$.
\end{proof}

\subsection{Least Squares Regression}

In this section, we describe the application of our subspace
embeddings to the problem of least squares regression. Here given a
matrix $A$ of size $n\times d$ and a vector $b\in \R^n$, the objective
is to find $x\in \R^d$ minimizing $\|Ax-b\|_2$. The reduction to
subspace embedding is similar to those of~\cite{CW12,Sarlos06}. The
proof is included for completeness.

\begin{theorem}
There is an algorithm for least squares regression running in time $O(\nnz(A) + d^3\log (d/\eps)/\eps^2)$ and succeeding with probability at least $2/3$.
\end{theorem}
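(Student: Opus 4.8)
The plan is to reduce least squares regression to a subspace embedding computation exactly as in \cite{Sarlos06,CW12}, then plug in \Theorem{main} (the $s=1$ OSNAP) to get the stated running time. First I would recall the standard reduction: let $W\subseteq\R^n$ be the $(d+1)$-dimensional subspace spanned by the columns of $A$ together with $b$. If $\SE\in\R^{m\times n}$ preserves all $\ell_2$ norms in $W$ up to a factor $1\pm\eps$, then for every $x\in\R^d$ we have $\|\SE Ax - \SE b\| = (1\pm\eps)\|Ax-b\|$ since $Ax-b\in W$. Consequently, if $\tilde x = \argmin_x \|\SE Ax-\SE b\|$, then $\|A\tilde x - b\| \le (1+O(\eps))\min_x\|Ax-b\|$; rescaling $\eps$ by a constant gives the $(1+\eps)$ guarantee. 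So it suffices to (i) sample such a $\SE$ with $m = O(d^2/\eps^2)$ and constant failure probability, which \Theorem{main} provides with $s=1$ applied to an orthonormal basis $U$ of $W$ (note $\dim W \le d+1$, so $m = O(d^2/\eps^2)$ after adjusting constants, and the $2\eps-\eps^2$ vs.\ $\eps$ discrepancy is absorbed into constants), and (ii) bound the time to form $\SE A$, $\SE b$ and to solve the reduced $m\times d$ regression problem.

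Next I would account for the running time. Since $\SE$ has exactly one nonzero per column, computing $\SE A$ and $\SE b$ takes $O(\nnz(A) + n) = O(\nnz(A))$ time (using our assumption $\nnz(A)\ge n$): each nonzero entry $A_{i,j}$ contributes $\sigma_i A_{i,j}$ to row $h(i)$, column $j$ of $\SE A$. After this the problem is an $m\times d$ least squares regression with $m = O(d^2/\eps^2)$, which can be solved exactly — e.g.\ via the normal equations or a QR decomposition of $\SE A$ — in time $O(m d^{\omega-1})$, or more simply $O(md^2)$ using a straightforward method; with $m = O(d^2/\eps^2)$ this is $O(d^4/\eps^2)$, which is worse than the claimed $O(d^3\log(d/\eps)/\eps^2)$. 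To get the sharper bound I would instead apply the embedding in two stages, as in \cite{CW12}: first use the $s=1$ OSNAP to go from $n\times d$ down to $O(d^2/\eps^2)\times d$, then apply a second, denser subspace embedding (or the Fast Johnson–Lindenstrauss / SRHT) to the already-small matrix to bring the number of rows down to $O(d\log(d/\eps)/\eps^2)$ or so, at a cost of $\tilde O(d^2/\eps^2 \cdot d) = \tilde O(d^3/\eps^2)$ for the second multiplication, and then solve the final $\tilde O(d/\eps^2)\times d$ regression in $\tilde O(d^3/\eps^2)$ time; composing two embeddings each with $1+\Theta(\eps)$ distortion still gives $1+\Theta(\eps)$ overall. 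The $\log(d/\eps)$ factor comes from the second embedding's row count.

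The main obstacle is managing the interplay between the two embedding stages to land exactly on the $O(\nnz(A) + d^3\log(d/\eps)/\eps^2)$ bound: one must verify that the composed map $\SE'\SE$ is still a subspace embedding for $W$ (which follows since $\SE U$ has all singular values in $[1-\eps,1+\eps]$, so its column space is a $(d+1)$-dimensional subspace on which $\SE'$ is in turn a $(1\pm\eps)$ embedding), and that no stage dominates the running time beyond what is claimed — in particular that computing $\SE' (\SE A)$ costs only $\tilde O(d^3/\eps^2)$ and the final small regression also costs $\tilde O(d^3/\eps^2)$. I also need to confirm that the overall success probability stays at least $2/3$ by taking each stage to succeed with probability, say, $5/6$ and union bounding; \Theorem{main} gives constant failure probability for any desired constant by adjusting $m$ by a constant factor, and the second embedding likewise. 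Since the paper explicitly says these applications follow by verbatim substitution of the new OSE's into \cite{CW12,Sarlos06}, I would keep the write-up terse: state the reduction, cite \Theorem{main} for the first stage, cite a standard dense subspace embedding for the second, and tally the four contributions ($\nnz(A)$ for $\SE A$; $\tilde O(d^3/\eps^2)$ for $\SE'\SE A$; $\tilde O(d^\omega)$ or $\tilde O(d^3/\eps^2)$ for solving the reduced system; all dominated as stated).
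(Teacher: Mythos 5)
Your proposal is correct and takes essentially the same approach as the paper: reduce to a $O(d^2/\eps^2)\times d$ regression via the $s=1$ embedding of Theorem 2, then solve the reduced instance in $O(d^3\log(d/\eps)/\eps^2)$ time. The only cosmetic difference is that the paper delegates the second stage (FJLT/SRHT followed by exact solve) to a single citation of Theorem~12 of Sarl\'os~\cite{Sarlos06}, whereas you unroll that step explicitly and arrive at the same bound.
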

\begin{proof}
Applying \Theorem{main} to the subspace spanned by columns of $A$ and
$b$, we get a distribution over matrices $\SE$ of size
$O(d^2/\eps^2)\times n$ such that $\SE$ preserves lengths of vectors
in the subspace up to a factor $1\pm\eps$ with probability at least
$5/6$. Thus, we only need to find $\argmin_x \|\SE Ax-\SE b\|_2$. Note
that $\SE A$ has size $O(d^2/\eps^2)\times d$. By Theorem 12
of~\cite{Sarlos06}, there is an algorithm that with probability at
least $5/6$, finds a $1\pm\eps$ approximate solution for least squares
regression for the smaller input of $\SE A$ and $\SE b$ and runs in time
$O(d^3\log(d/\eps)/\eps^2)$.
\end{proof}

The following theorem follows from using the embedding of \Theorem{main-moments} and the same argument as~\cite[Theorem 32]{CW12}.
\begin{theorem}
Let $r$ be the rank of $A$. There is an algorithm for least squares regression running in time $O(\nnz(A)((\log r)^{O(1)}+\log(n/\eps)) + r^{\omega}(\log r)^{O(1)} + r^2\log(1/\eps))$ and succeeding with probability at least $2/3$.
\end{theorem}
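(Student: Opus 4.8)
The plan is to follow the preconditioning argument of \cite[Theorem~32]{CW12} essentially verbatim, simply plugging in the OSNAP of \Theorem{main-moments}, run at a \emph{constant} distortion $\eps_0$ (say $\eps_0 = 1/2$) and failure probability below $1/6$, in place of the embedding used there. I would proceed in four stages. First, preprocess with \cite{CKL12} to replace $A$ by a set of $r=\rank(A)$ linearly independent columns spanning its column space; this costs $O((\nnz(A)+r^\omega)\log n)$ and lets us assume $A\in\R^{n\times r}$ has full column rank. Second, draw an OSNAP $\SE$ with $m=\tilde{O}(r)$ rows and column sparsity $s=(\log r)^{O(1)}$; by \Theorem{main-moments} (applied with subspace dimension $r+1$) this is, with probability $\ge 5/6$, a $1\pm\eps_0$ subspace embedding for the span of the columns of $A$ together with $b$, and both $\SE A$ and $\SE b$ are computable in $O(\nnz(A)\cdot s) = O(\nnz(A)(\log r)^{O(1)})$ time. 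Third, compute a QR factorization $\SE A = QR$ with $Q$ having orthonormal columns and $R\in\R^{r\times r}$ invertible, together with $R^{-1}$; since $m=\tilde{O}(r)$ this is $\tilde{O}(m r^{\omega-1}) = r^\omega(\log r)^{O(1)}$ time.

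The point of the QR step is that $R^{-1}$ is a constant-quality right-preconditioner: for every $x\in\R^r$ we have $\|AR^{-1}x\| = (1\pm\eps_0)\,\|\SE AR^{-1}x\| = (1\pm\eps_0)\,\|Qx\| = (1\pm\eps_0)\,\|x\|$, so all singular values of $AR^{-1}$ lie in $[1-\eps_0,1+\eps_0]$ and $\kappa(AR^{-1}) = O(1)$. In the fourth stage I would first obtain a warm start by sketch-and-solve: set $x_0 = (\SE A)^+\SE b = R^{-1}Q^*\SE b$, computable in $\tilde{O}(r^2)$ time; since $\SE$ embeds the relevant subspace, $\|Ax_0 - b\| \le \frac{1+\eps_0}{1-\eps_0}\min_{x}\|Ax-b\| = O(\mathrm{OPT})$. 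Then solve the equivalent problem $\min_y\|(AR^{-1})y - b\|$ with a preconditioned iterative least-squares method (LSQR, or conjugate gradients on the normal equations), initialized at $y_0 = Rx_0$. Because $\kappa(AR^{-1}) = O(1)$ and the starting residual is already $O(\mathrm{OPT})$, the residual $\|(AR^{-1})y_k - b\|$ decreases geometrically and reaches $(1+\eps)\,\mathrm{OPT}$ after $k = O(\log(1/\eps))$ iterations; each iteration applies $AR^{-1}$ and its adjoint once, i.e.\ one multiplication by $A$ or $A^*$ ($O(\nnz(A))$ time) and one by $R^{-1}$ or $R^{-*}$ ($O(r^2)$ time). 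Output $\tilde x = R^{-1}y_k$.

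Summing the stages, the total time is $O\big((\nnz(A)+r^\omega)\log n\big) + O\big(\nnz(A)(\log r)^{O(1)}\big) + r^\omega(\log r)^{O(1)} + O\big((\nnz(A)+r^2)\log(1/\eps)\big)$, which is exactly $O\big(\nnz(A)((\log r)^{O(1)}+\log(n/\eps)) + r^\omega(\log r)^{O(1)} + r^2\log(1/\eps)\big)$; the overall success probability is the constant coming from the embedding (together with any randomness in \cite{CKL12}), which is $\ge 2/3$ after adjusting constants, and can be amplified by independent repetition keeping the best $\tilde x$ found.

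The one part that needs genuine care—though it is identical to the corresponding argument in \cite{CW12}—is the convergence analysis of the iterative solver: one must verify that a constant-condition-number preconditioner yields geometric decay of the \emph{residual} $\|A\tilde x - b\|$ toward $(1+\eps)\,\mathrm{OPT}$ (and not merely of the solution error $\|\tilde x - x^*\|$), which is precisely why the warm start $x_0$ is needed—without it the iteration count would carry an undesirable $\log(\|b\|/\mathrm{OPT})$ factor. Given the warm start, the iteration count is $O(\log(1/\eps))$, and the $\log n$ appearing in the final $\nnz(A)$ term is contributed solely by the \cite{CKL12} preprocessing; everything else is a mechanical substitution of our OSNAP parameters into the known reduction.
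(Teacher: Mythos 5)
Your reconstruction is correct and matches the paper's intent: the paper's proof is a one-line citation to~\cite[Theorem~32]{CW12}, and the argument there is exactly the pipeline you describe (reduce to $r$ independent columns via~\cite{CKL12}, sketch at constant distortion with the OSNAP of \Theorem{main-moments}, QR to obtain a constant-condition-number right-preconditioner $R^{-1}$, sketch-and-solve warm start, then $O(\log(1/\eps))$ rounds of preconditioned LSQR/CG). The parameter substitutions $m=\tilde O(r)$, $s=(\log r)^{O(1)}$ and the resulting time accounting are accurate, including the point you correctly flag that the warm start is what removes any $\log(\|b\|/\mathrm{OPT})$ dependence from the iteration count.
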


\subsection{Low Rank Approximation}
In this section, we describe the application of our subspace
embeddings to low rank approximation. Here given a matrix $A$, one
wants to find a rank $k$ matrix $A_k$ minimizing $\|A-A_k\|_F$. Let
$\Delta_k$ be the minimum $\|A-A_k\|_F$ over all rank $k$ matrices
$A_k$. Notice that our matrices are of the same form as sparse JL
matrices considered by~\cite{KN12} so the following property holds for
matrices constructed in \Theorem{main-moments} (cf.~\cite[Lemma
24]{CW12}).
\begin{theorem}~\cite[Theorem 19]{KN12}
Fix $\eps, \delta > 0$. Let $\mathcal{D}$ be the distribution over
matrices given in \Theorem{main-moments} with $n$ columns. For any
matrices $A, B$ with $n$ rows,
$$\Pr_{S\sim \mathcal{D}}[\|A^T S^T S B -A^TB\|_F > 3\eps/2\|A\|_F\|B\|_F] < \delta$$
\end{theorem}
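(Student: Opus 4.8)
The plan is to deduce this from the \emph{Johnson--Lindenstrauss moment property} of the sparse JL / OSNAP distribution together with the standard reduction from that property to Frobenius-norm approximate matrix multiplication. Since the distribution of \Theorem{main-moments} is one of the sparse JL families of \cite{KN12}, the statement is, strictly speaking, just a restatement of \cite[Theorem 19]{KN12} (equivalently \cite[Lemma 24]{CW12}), so the shortest correct write-up is a one-liner; I will nonetheless sketch the self-contained route, which reuses machinery already in this paper.

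First I would record the moment estimate needed: for every even integer $\ell$ and every unit vector $x\in\R^n$, $\E_S\bigl|\,\|Sx\|_2^2-1\,\bigr|^\ell\le(C\eps)^\ell\delta$ for an absolute constant $C$. This is precisely the specialization of \Equation{alternate-params} (established in the proof of \Theorem{main-moments}) to the case $d=1$, i.e.\ to $U$ a single unit column $x$: there the quantity whose deviation from $I$ is controlled is the $1\times1$ Gram matrix $\|Sx\|_2^2$, so $\E\tr((\,\cdot\,-I)^\ell)$ is exactly $\E_S\bigl|\,\|Sx\|_2^2-1\,\bigr|^\ell$, and \Equation{alternate-params} gives the bound. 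The parameters of \Theorem{main-moments} ($s=\Theta(\log^3(d/\delta)/\eps)$, $m=\Omega(d\log^8(d/\delta)/\eps^2)$, $\Omega(\log(d/\delta))$-wise independence) dominate termwise the $s=\Omega(\eps^{-1}\log(1/\delta))$, $m=\Omega(\eps^{-2}\log(1/\delta))$, $\Omega(\log(1/\delta))$-wise requirements of \cite{KN12}, so the estimate holds at $\ell=\Theta(\log(1/\delta))$.

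Next I would pass to inner products by polarization. Writing $a_1,\ldots,a_p$ for the columns of $A$ and $b_1,\ldots,b_q$ for those of $B$, one has $\inprod{Sa_i,Sb_j}-\inprod{a_i,b_j}=\tfrac14\bigl[(\|S(a_i+b_j)\|^2-\|a_i+b_j\|^2)-(\|S(a_i-b_j)\|^2-\|a_i-b_j\|^2)\bigr]$; applying the moment estimate to the unit vectors $(a_i\pm b_j)/\|a_i\pm b_j\|$, the triangle inequality in $L^\ell$, and homogeneity (normalize $\|a_i\|=\|b_j\|=1$, so $\|a_i\pm b_j\|^2\le4$) gives $\bigl\|\inprod{Sa_i,Sb_j}-\inprod{a_i,b_j}\bigr\|_{L^\ell}\le C'\eps\,\delta^{1/\ell}\|a_i\|\|b_j\|$. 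Then, since $\|A^TS^TSB-A^TB\|_F^2=\sum_{i,j}(\inprod{Sa_i,Sb_j}-\inprod{a_i,b_j})^2$, Minkowski's inequality in $L^{\ell/2}$ (valid as $\ell\ge2$) yields $\bigl\|\,\|A^TS^TSB-A^TB\|_F^2\,\bigr\|_{L^{\ell/2}}\le\sum_{i,j}\bigl\|\inprod{Sa_i,Sb_j}-\inprod{a_i,b_j}\bigr\|_{L^\ell}^2\le(C'\eps)^2\delta^{2/\ell}\|A\|_F^2\|B\|_F^2$, hence $\E\|A^TS^TSB-A^TB\|_F^\ell\le(C'\eps)^\ell\delta\|A\|_F^\ell\|B\|_F^\ell$. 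Markov's inequality with $\ell=\Theta(\log(1/\delta))$ then gives $\Pr[\|A^TS^TSB-A^TB\|_F>t]\le(C'\eps\|A\|_F\|B\|_F/t)^\ell\delta$, which is strictly below $\delta$ once $t=(3\eps/2)\|A\|_F\|B\|_F$, provided the moment estimate was invoked with a small enough constant multiple of $\eps$ in place of $\eps$; this only rescales $s,m$ by constants and is absorbed by the generous parameters of \Theorem{main-moments}.

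The only genuine friction is the constant bookkeeping at the end --- carrying the polarization and Minkowski constants so that the tail threshold comes out exactly at $3\eps/2$ rather than an unspecified $O(\eps)$, and confirming that the moment order $\ell=\Theta(\log(1/\delta))$ we need is supported by the independence and sparsity of \Theorem{main-moments} --- and both of these are routine precisely because those parameters dominate the \cite{KN12} requirements termwise, which is why the cleanest presentation is simply: the distribution of \Theorem{main-moments} is a sparse JL family, so \cite[Theorem 19]{KN12} applies.
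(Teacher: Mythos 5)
The paper gives no real proof here: the statement is cited directly from \cite[Theorem 19]{KN12}, justified only by the one-sentence remark preceding it that the OSNAP distribution of \Theorem{main-moments} is one of the sparse JL families of \cite{KN12}. Your proposal correctly identifies this citation as the intended argument, and your self-contained alternative --- specializing the trace-moment bound \Equation{alternate-params} to a $d=1$ subspace to obtain the JL moment property $\E\bigl|\,\|Sx\|^2-1\,\bigr|^\ell\le (O(\eps))^\ell\delta$ at $\ell=\Theta(\log(1/\delta))$, then passing to inner products by polarization, to Frobenius error by Minkowski in $L^{\ell/2}$, and to a tail bound by Markov --- is exactly the standard derivation that underlies \cite[Theorem 19]{KN12}, and each step is correct. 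This route buys self-containment but nothing else; the constant bookkeeping you flag (running the moment estimate with a constant-factor-smaller $\eps$ so that the polarization and Minkowski constants land the threshold at $3\eps/2$) is indeed routine and is absorbed by the implicit constants in $s=\Theta(\log^3(d/\delta)/\eps)$ and $m=\Omega(d\log^8(d/\delta)/\eps^2)$, and your observation that these parameters dominate the $d=1$ requirements of \cite{KN12} termwise is the right way to justify recycling the machinery of \Theorem{main-moments} for this purpose.
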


The matrices of \Theorem{main} are the same as those of~\cite{CW12} so the above property holds for them as well. Therefore, the same algorithm and analysis as in~\cite{CW12} work. We state the improved bounds using the embedding of \Theorem{main} and \Theorem{main-moments} below (cf.~\cite[Theorem 36 and 38]{CW12}).
\begin{theorem}
Given a matrix $A$ of size $n\times n$, there are 2 algorithms that, with probability at least $3/5$, find 3 matrices $U,\Sigma, V$ where $U$ is of size $n\times k$, $\Sigma$ is of size $k \times k$, $V$ is of size $n\times k$, $U^T U = V^T V = I_k$, $\Sigma$ is a diagonal matrix, and
$$\|A-U\Sigma V^*\|_F\le (1+\eps)\Delta_k$$
The first algorithm runs in time $O(\nnz(A))+\tilde{O}(nk^2 + nk^{\omega-1}\eps^{-1-\omega} + k^{\omega}\eps^{-2-\omega})$. The second algorithm runs in time $O(\nnz(A)\log^{O(1)} k)+\tilde{O}(nk^{\omega-1}\eps^{-1-\omega} + k^{\omega}\eps^{-2-\omega})$.
\end{theorem}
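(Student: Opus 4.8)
\medskip
\noindent\textbf{Proof proposal.} The plan is to run the Clarkson--Woodruff low-rank approximation algorithms (their Theorems~36 and~38) essentially verbatim, substituting our OSNAPs for the sparse embeddings used there: for the ``first algorithm'' the $s=1$ OSNAP of \Theorem{main} in place of their $m=\tilde{O}(d^4/\eps^4),s=1$ sketch, and for the ``second algorithm'' the $s=\polylog(k)/\eps$ OSNAP of \Theorem{main-moments} in place of their $m=\tilde{O}(d^2/\eps^4),s=O(\log d/\eps)$ sketch. The Clarkson--Woodruff correctness proof uses only two abstract properties of the sketch: (i) that it is a $(1\pm\eps')$ subspace embedding for any fixed subspace of the relevant dimension ($O(k)$ or $\tilde{O}(k/\eps)$), which is exactly \Theorem{main-moments} (resp.\ \Theorem{main}) with that subspace playing the role of $U$; and (ii) the Frobenius approximate-matrix-multiplication bound $\Pr[\|C^*\SE^*\SE D-C^*D\|_F>\gamma\|C\|_F\|D\|_F]<\delta$, which is \cite[Theorem 19]{KN12} quoted above. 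So the work reduces to (a) re-deriving the sketch dimensions $m$ from (i)--(ii) and (b) redoing the running-time accounting with the sparsity $s$ of our embeddings; no part of the analysis itself changes.

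First I would recall the structural skeleton, which I only sketch since it is entirely from \cite{CW12}. Write $A_k=U_k\Sigma_kV_k^*$ for a best rank-$k$ approximation of $A$ and $E=A-A_k$, so $\Delta_k=\|E\|_F$. Draw a left OSNAP $\SE$ and a right OSNAP $\SE'$. Property (i) for $\mathrm{colspace}(U_k)$ at constant accuracy $\eps'=1/2$, together with property (ii) for the pair $(U_k,E)$ at AMM error $\gamma=\sqrt{\eps/k}$, gives the Sarl\'os/Clarkson--Woodruff conclusion that $\mathrm{rowspace}(\SE A)$ contains a rank-$k$ matrix within $(1+\eps)\Delta_k$ of $A$; symmetrically for $\SE'$ and $\mathrm{colspace}$. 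Because the AMM error need only be $\sqrt{\eps/k}$ (and the embedding only constant accuracy), the sketch stays small: $\tilde{O}(k/\eps)$ rows suffice for \Theorem{main-moments} and $O(k^2/\eps)$ for \Theorem{main}. One then orthonormalizes the rows of $\SE A$ and the columns of $A(\SE')^*$, reduces the constrained rank-$k$ approximation problem to a small problem on the core $\SE A(\SE')^*$ (of size $\approx k/\eps$ by $k/\eps$), solves that by a truncated SVD / rank-constrained regression, uses one further AMM/affine-embedding application to certify that lifting the small solution back costs only another $(1+O(\eps))$ factor, and finally reads off $U,\Sigma,V$. Rescaling $\eps$ by a constant gives the theorem, and a union bound over a constant number of OSNAP/AMM events (each pushed below a small constant) gives overall success $\ge 3/5$.

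For the running time I would split into a sketching phase and a dense phase. Forming $\SE A$, $A(\SE')^*$, and $\SE A(\SE')^*$ costs $O(s\cdot\nnz(A))+\tilde{O}(\poly(k/\eps))$ field operations, which is $O(\nnz(A))$ when $s=1$ (\Theorem{main}) and $O(\nnz(A)\log^{O(1)}k)$ when $s=\polylog(k)/\eps$ (\Theorem{main-moments}); this is the only place the $O(\nnz(A))$ versus $O(\nnz(A)\log^{O(1)}k)$ dichotomy enters. Everything afterward acts on matrices with at most $n$ rows and $\tilde{O}(k/\eps)$ columns, or on $\tilde{O}(k/\eps)\times\tilde{O}(k/\eps)$ cores: QR/orthonormalization in matrix-multiplication time $\tilde{O}(n(k/\eps)^{\omega-1})$, the core SVD in $\tilde{O}((k/\eps)^{\omega})$, the rank-constrained regression in $\tilde{O}(k^{\omega}\eps^{-2-\omega})$ after the appropriate further sketches, and forming the $n\times k$ factors by multiplying an $n\times\tilde{O}(k/\eps)$ matrix against a $\tilde{O}(k/\eps)\times k$ matrix in $\tilde{O}(nk^{\omega-1}\eps^{-1-\omega})$ and $\tilde{O}(nk^2)$ time; these sum to the stated $\tilde{O}(nk^2+nk^{\omega-1}\eps^{-1-\omega}+k^{\omega}\eps^{-2-\omega})$. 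Here I use that matrix inverse, QR, and SVD all run in matrix-multiplication time, as cited at the start of \Section{apps}.

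The main obstacle is bookkeeping rather than mathematics: there is no new idea, but I must verify that a single OSNAP draw simultaneously gives the (constant-accuracy) subspace embedding of \Theorem{main}/\Theorem{main-moments} for the relevant $O(k)$- or $\tilde{O}(k/\eps)$-dimensional spaces \emph{and} the Frobenius AMM bound at error $\sqrt{\eps/k}$, with the same $m,s$ that drive the $\nnz(A)$ dichotomy, and then keep the $\eps$-exponents straight through the matrix-multiplication-time subroutines so the dense term comes out exactly as $\tilde{O}(nk^2+nk^{\omega-1}\eps^{-1-\omega}+k^{\omega}\eps^{-2-\omega})$. Any slack in the AMM error or in a sketch dimension translates directly into a worse power of $\eps$, so the parameter audit is the one step that needs real care.
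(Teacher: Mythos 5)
Your proposal is correct and takes essentially the same approach as the paper: substitute the OSNAPs of \Theorem{main} and \Theorem{main-moments} into the Clarkson--Woodruff low-rank approximation algorithms (their Theorems~36 and~38), noting that only the subspace-embedding and approximate-matrix-multiplication properties are used, and then redo the running-time accounting with the new $m$ and $s$. The paper's own proof is far terser --- it defers entirely to \cite{CW12} and only records the two running-time primitives it needs (rectangular multiplication of an $a\times b$ by $b\times c$ matrix with $c>a$ in $O(a^{\omega-2}bc)$ time, and approximate SVD of an $a\times b$ matrix with $a>b$ in $O(ab^{\omega-1})$ time via $M^*M$) --- whereas you spell out the algorithmic skeleton and parameter choices that it leaves implicit.
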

\begin{proof}
The proof is essentially the same as that of~\cite{CW12} so we only
mention the difference. We use 2 bounds for the
running time: multiplying an $a\times b$ matrix and a $b\times c$
matrix with $c>a$ takes $O(a^{\omega-2}bc)$ time (simply dividing the
matrices into $a\times a$ blocks), and approximating SVD for an
$a\times b$ matrix $M$ with $a>b$ takes $O(ab^{\omega-1})$ time (time
to compute $M^T M$, approximate SVD of $M^T M = QDQ^T$ in
$O(b^{\omega})$ time~\cite{DDH07}, and compute $MQ$ to complete the
SVD of $M$).
\end{proof}

\section*{Acknowledgments}
We thank Andrew Drucker for suggesting the SNAP acronym for the OSE's
considered in this work, to which we added the ``oblivious''
descriptor.

\bibliographystyle{alpha}

\bibliography{../allpapers}

\end{document}